\def\BibTeX{{\rm B\kern-.05em{\sc i\kern-.025em b}\kern-.08em
    T\kern-.1667em\lower.7ex\hbox{E}\kern-.125emX}}
\newtheorem{claim}{Claim}
\newtheorem{corollary}{Corollary}
\newtheorem{theorem}{Theorem}
\newtheorem{lemma}{Lemma}
\newtheorem{remark}{Remark}
\title{Feedback Capacity of the Gaussian Interference Channel to within 2 Bits}
\author{Changho Suh and David Tse \\
Wireless Foundations\\
University of California at Berkeley \\
Email: \{chsuh, dtse\}@eecs.berkeley.edu
}
\begin{document}

\IEEEoverridecommandlockouts

\IEEEaftertitletext{
\begin{abstract}
We characterize the capacity region to within 2 bits/s/Hz and the symmetric capacity to within 1 bit/s/Hz for the two-user Gaussian interference channel (IC) with feedback. We develop achievable schemes and derive a new outer bound to arrive at this conclusion. One consequence of the result is that feedback provides \emph{multiplicative} gain, i.e., the gain becomes arbitrarily large for certain channel parameters. It is a surprising result because feedback has been so far known to provide no gain in memoryless point-to-point channels and only \emph{bounded additive} gain in multiple access channels. The gain comes from using feedback to maximize resource utilization, thereby enabling more efficient resource sharing between the interfering users. The result makes use of a deterministic model to provide insights into the Gaussian channel. This deterministic model is a special case of El Gamal-Costa deterministic model and as a side-generalization, we establish the exact feedback capacity region of this general class of deterministic ICs.
\end{abstract}
\begin{keywords}
Feedback Capacity, The Gaussian Interference Channel, A Deterministic Model
\end{keywords}
}

\maketitle

\section{Introduction}

Shannon showed that feedback does not increase capacity in memoryless point-to-point channels \cite{shannon:it}. On the other hand,  feedback can indeed increase capacity in memory channels such as colored Gaussian noise channels. While it can provide multiplicative gain especially in the low $\sf SNR$ regime, the gain is \emph{bounded}, i.e., feedback can provide a capacity increase of at most one bit \cite{Cover:it89,Kim:it06}. In the multiple access channel (MAC), however, Gaarder and Wolf \cite{Gaarder:it} showed that feedback could increase capacity although the channel is memoryless. Inspired by this result, Ozarow \cite{Ozarow:it} found the feedback capacity region for the two-user Gaussian MAC. Ozarow's result reveals that feedback provides only \emph{additive power} gain. The reason for the bounded power gain is that in the MAC, transmitters cooperation induced by feedback can at most boost signal power via aligning signal directions. Boosting signal power provides a capacity increase of a constant number of bits.

A question arises: will feedback help significantly in other channels where a receiver wants to decode only desired message in the presence of interference? To answer this question, we focus on the simple two-user Gaussian interference channel (IC) where each receiver wants to decode the message only from its corresponding transmitter. We first make progress on the symmetric capacity. Gaining insights from a deterministic model \cite{Salman:allterton07} and Alamouti's scheme~\cite{Alamouti:jsac98}, we develop a simple two-staged achievable scheme. We then derive a new outer bound to show that the proposed scheme achieves the symmetric capacity to within one bit for all values of the channel parameters.

An interesting consequence of this result is that feedback can provide \emph{multiplicative} gain in interference channels. This can be shown from the generalized degrees-of-freedom (g.d.o.f.) in Fig.~\ref{fig:gdof}. The notion was defined in \cite{dtse:it07} as
\begin{align}
d(\alpha) \triangleq \lim_{\mathsf{SNR}, \mathsf{INR} \rightarrow \infty} \frac{C_{\mathsf{sym}}(\mathsf{SNR},\mathsf{INR})}{ \log \mathsf{SNR}},
\end{align}
where $\alpha$ ($x$-axis) indicates the ratio of $\mathsf{INR}$ to $\mathsf{SNR}$ in dB scale: $\alpha \triangleq \frac{\log \mathsf{INR}}{ \log \mathsf{SNR}}$.
Notice that in certain weak interference regimes ($0 \leq \alpha \leq \frac{2}{3}$) and in the very strong interference regime ($ \alpha \geq 2$), feedback gain becomes arbitrarily large as $\mathsf{SNR}$ and $\mathsf{INR}$ go to infinity. 
\begin{figure}[h]
\begin{center}
{\epsfig{figure=./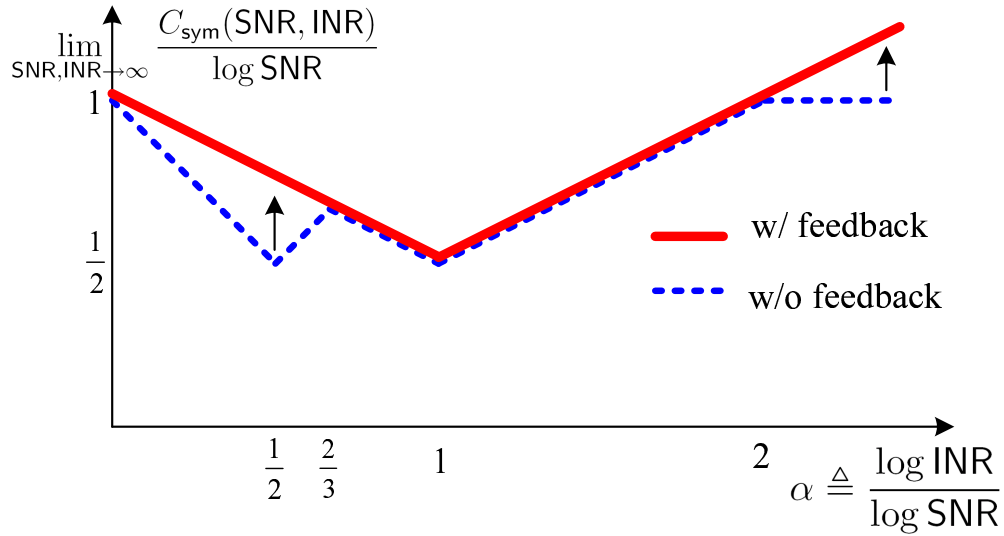, angle=0, width=0.8\textwidth}}
\end{center}
\caption{The generalized degrees-of-freedom of the Gaussian interference channel (IC) with feedback. For certain weak interference regimes ($0 \leq \alpha \leq \frac{2}{3}$) and for the very strong interference regime ($ \alpha \geq 2$), the gap between the non-feedback and the feedback capacity becomes arbitrarily large as $\sf SNR$ and $\sf INR$ go to infinity. This implies that feedback can provide unbounded gain.} \label{fig:gdof}
\end{figure}
For instance, when $\alpha =\frac{1}{2}$, the gap between the non-feedback and the feedback capacity becomes unbounded with the increase of $\sf SNR$ and $\sf INR$, i.e.,
\begin{align}
C_{\sf sym}^{\sf FB} - C_{\sf sym}^{\sf NO} \longrightarrow  \frac{1}{4} \log {\sf SNR} \longrightarrow \infty.
\end{align}
Observing the ratio of the feedback to the non-feedback capacity in the high $\sf SNR$ regime, one can see that feedback provides \emph{multiplicative} gain (50\% gain for $\alpha = \frac{1}{2}$): $\frac{C_{\sf sym}^{\sf FB}}{C_{\sf sym}^{\sf NO}} \rightarrow 1.5$.

Moreover, we generalize the result to characterize the feedback capacity region to within 2 bits per user for all values of the channel parameters. Unlike the symmetric case, we develop an infinite-staged achievable scheme that employs three techniques: (i) block Markov encoding~\cite{Cover:it79,Cover:it81}; (ii) backward decoding~\cite{Kuhlmann:it89}; and (iii) Han-Kobayashi message splitting~\cite{HanKoba:it81}. This result shows interesting contrast with the non-feedback capacity result. In the non-feedback case, it has been shown that the capacity region is described by five types of inequalities including the bounds for $R_1 + 2R_2$ and $2R_1 + R_2$~\cite{HanKoba:it81,dtse:it07}. On the other hand, our result shows that the feedback capacity region requires only three types of inequalities without the $R_1 + 2R_2$ and $2R_1 + R_2$ bounds.


We also develop new interpretation, what we call a \emph{resource hole interpretation}, to provide qualitative insights as to where feedback gain comes from. We find that the gain comes from using feedback to maximize resource utilization, thereby enabling more efficient resource sharing between the interfering users. Also the efficient resource utilization due to feedback turns out to deactivate the $2R_1 + R_2$ bound.

Our results make use of a deterministic model~\cite{Salman:allterton07} to provide insights into the Gaussian channel. This deterministic model is a special case of El Gamal-Costa model~\cite{ElGamal:it82}. As a side-generalization, we establish the exact feedback capacity region of this general class of deterministic ICs. From this result, one can infer an approximate feedback capacity region of two-user Gaussian MIMO ICs, as Teletar and Tse~\cite{Teletar:isit07} did in the non-feedback case.

Interference channels with feedback have received previous attention~\cite{Kramer:it02, Kramer:it04, GastparKramer:06, Jiang:07}. Kramer~\cite{Kramer:it02,Kramer:it04} developed a feedback strategy in the Gaussian IC; Kramer and Gastpar~\cite{GastparKramer:06} derived an outer bound. However, the gap between the inner and outer bounds becomes arbitrarily large with the increase of $\sf SNR$ and $\sf INR$.\footnote{Although this strategy can be arbitrarily far from optimality, a careful analysis reveals that it can also provide multiplicative feedback gain. See Fig. \ref{fig:gdof-compare} for this.} Jiang-Xin-Garg~\cite{Jiang:07} found an achievable region in the discrete memoryless IC with feedback, based on block Markov encoding~\cite{Cover:it79} and binning. However, their scheme involves three auxiliary random variables and therefore requires further optimization. Also no outer bounds are provided. We propose explicit achievable schemes and derive a new tighter outer bound to characterize the capacity region to within 2 bits and the symmetric capacity to within 1 bit universally. Subsequent to our work, Prabhakaran and Viswanath~\cite{Vinod:arix09} have found an interesting connection between our feedback problem and the conferencing encoder problem. Making such a connection, they have independently characterized the sum feedback capacity to within 19 bits/s/Hz. 

\section{Model}
\label{sec-DIC}

Fig. \ref{fig:Gaussian} describes the two-user Gaussian IC with feedback.
Without loss of generality, we normalize signal power and noise power to 1, i.e., $P_k=1$, $Z_k \sim \mathcal{CN}(0,1)$, $\forall k=1,2$. Hence, the signal-to-noise ratio and the interference-to-noise ratio can be defined to capture channel gains:
\begin{align}
\begin{split}
 \mathsf{SNR}_1 &\triangleq |g_{11}|^2, \; \mathsf{SNR}_2 \triangleq |g_{22}|^2, \\
 \mathsf{INR}_{12} &\triangleq |g_{12}|^2, \; \mathsf{INR}_{21} \triangleq |g_{21}|^2.
\end{split}
\end{align}
There are two independent and uniformly distributed sources, $W_k \in \left\{ 1,2,\cdots, m_k \right\}, \forall k=1,2$.
Due to feedback, the encoded signal $X_{ki}$ of user $k$ at time $i$ is a function of its own message and past output sequences:
\begin{align}
X_{ki} = f_{k}^{i} \left(W_k,Y_{k1},\cdots,Y_{k(i-1)} \right)= f_k^{i} \left(W_k,Y_{k}^{i-1} \right),
\end{align}
where we use shorthand notation $Y_{k}^{i-1}$ to indicate the sequence up to $i-1$.
A rate pair $(R_1,R_2)$ is achievable if there exists a family of codebook pairs with codewords (satisfying power constraints) and decoding functions such that the average decoding error probabilities go to zero as block length $N$  goes to infinity. The capacity region $\mathcal{C}$ is the closure of the set of the achievable rate pairs.


\begin{figure}[t]
\begin{center}
{\epsfig{figure=./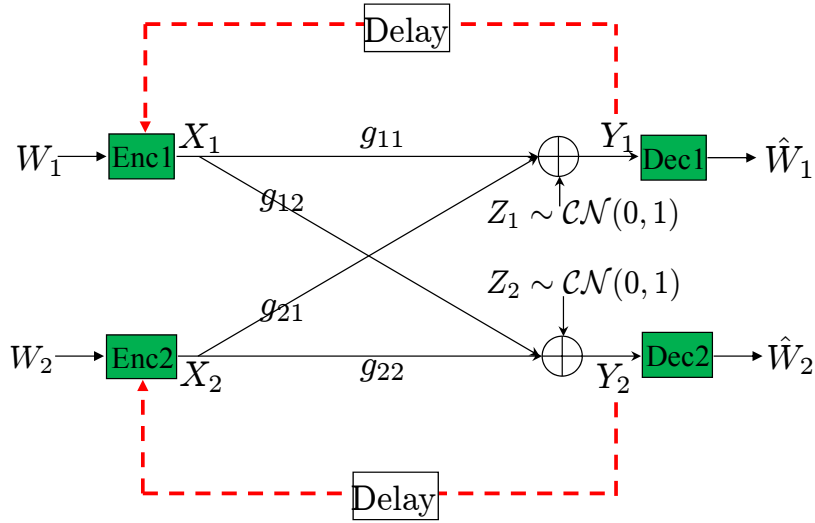, angle=0, width=0.65\textwidth}}
\end{center}
\caption{The Gaussian interference channel (IC) with feedback} \label{fig:Gaussian}
\end{figure}



\section{Symmetric Capacity to within One Bit}
\label{sec-Symmetric}

We start with a symmetric channel setting where $|g_{11}|=|g_{22}|=|g_d|$ and $|g_{12}|=|g_{21}|=|g_c|$:
\begin{align}
\begin{split}
\mathsf{SNR} \triangleq {\sf SNR}_1 = {\sf SNR}_2, \; \mathsf{INR} \triangleq {\sf INR}_{12} = {\sf INR}_{21}.
\end{split}
\end{align}
Not only is this symmetric case simple, it also provides the key ingredients to both the achievable scheme and outer bound needed for the characterization of the capacity region. Furthermore, this case provides enough qualitative insights as to where feedback gain comes from. Hence, we first focus on the symmetric channel.
Keep in mind however that our proposed scheme for a symmetric rate is different from that for a rate region: in the symmetric case, the scheme employs only two stages (or blocks), while an infinite number of stages are used in the general case. We will address this in Section~\ref{sec:FB_CapacityRegion}.

The symmetric capacity is defined by
\begin{align}
C_{\mathsf{sym}} = \sup \left\{R: (R,R) \in \mathcal{C} \right\},
\end{align}
where $\mathcal{C}$ is the capacity region.

\begin{figure}[t]
\begin{center}
{\epsfig{figure=./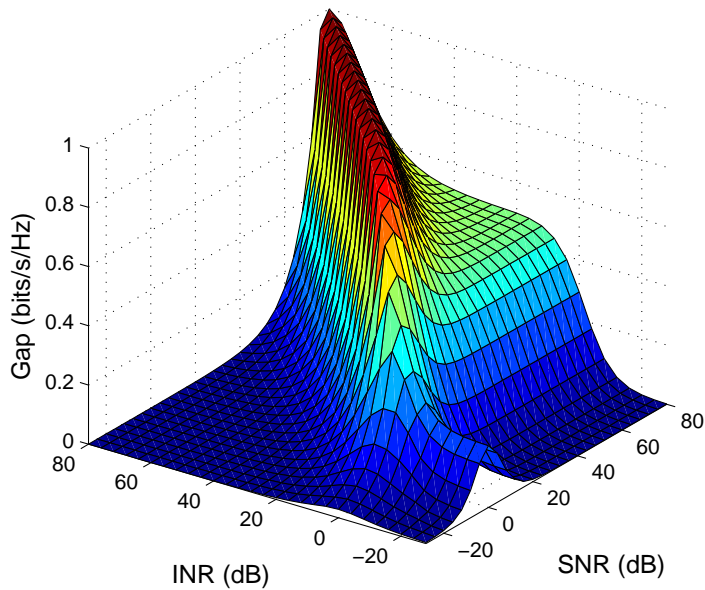, angle=0, width=0.6\textwidth}}
\end{center}
\caption{The gap between our inner and upper bounds. The gap is upper-bounded by exactly one bit. The worst-case gap occurs when ${\sf SNR } \approx {\sf INR}$ and these values go to infinity. In the strong interference regime, the gap vanishes with the increase of $\sf SNR$ and $\sf INR$, while in the weak interference regime, the gap does not, e.g., the gap is around 0.5 bits for $\alpha = \frac{1}{2}$.} \label{fig:Gap_Proposed}
\end{figure}

\begin{theorem}
\label{theorem-symmetric}
We can achieve a symmetric rate of
\begin{align}
\begin{split}
\label{eq:SymmetricAchievableRate}
R_{\sf sym} =  \max \left( \frac{1}{2} \log \left(1 + \mathsf{INR} \right),
\frac{1}{2} \log \left( \frac{ (1+ \sf SNR + \sf INR)^2 - \frac{\sf SNR}{1 + \sf INR} }{1 + 2 \sf INR} \right) \right).
\end{split}
\end{align}
The symmetric capacity is upper-bounded by
\begin{align}
\begin{split}
\label{eq:SymmetricUpperBound}
\overline{C}_{\mathsf{sym}} = \frac{1}{2} \sup_{0 \leq \rho \leq 1} \left[ \log \left(1 + \frac{(1-\rho^2) \mathsf{SNR}}{ 1 + (1-\rho^2)\mathsf{INR}} \right) + \log \left( 1 + \mathsf{SNR} + \mathsf{INR} + 2 \rho \sqrt{\mathsf{SNR} \cdot \mathsf{INR} } \right) \right].
\end{split}
\end{align}
For all channel parameters of $\mathsf{SNR}$ and $\mathsf{INR}$, we can achieve all rates $R$ up to $\overline{C}_{\sf sym}-1$, i.e.,
\begin{align}
\overline{C}_{\mathsf{sym}} -1 \leq R \leq  \overline{C}_{\sf sym}.
\end{align}
\end{theorem}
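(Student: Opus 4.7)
The plan is to prove the three claims of the theorem separately: first the achievability of $R_{\sf sym}$, then the upper bound $\overline{C}_{\sf sym}$, and finally the 1-bit gap by directly comparing the two expressions. For achievability I would follow the two-stage/Alamouti-inspired recipe hinted at in the introduction. The first term $\tfrac{1}{2}\log(1+\mathsf{INR})$ is straightforward: in the very strong interference regime each receiver can decode the other user's message first, so a single-stage scheme using Gaussian codebooks and treating interference as decodable achieves this. For the second term I would design a two-block scheme: in block 1 each transmitter sends a fresh Gaussian codeword; after block 1 each transmitter learns the interfering codeword exactly via the feedback link (modulo its own signal and noise, which it knows); in block 2 each transmitter retransmits a linear combination that is Alamouti-orthogonal to the interference pattern, so that at each receiver the desired signal is effectively received twice with interference that can be cancelled. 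Computing the resulting mutual information $\tfrac{1}{2}I(X_1;Y_1^{(1)},Y_1^{(2)})$ with independent Gaussian inputs and the correct Alamouti combining weights should yield exactly the second expression in \eqref{eq:SymmetricAchievableRate}.

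For the upper bound, I would use a Fano-style cut argument tailored to feedback. The key observation is that with feedback the two channel inputs $X_1,X_2$ are no longer independent, so one should introduce a correlation parameter $\rho = \mathbb{E}[X_1X_2^*]$ and optimize over $\rho \in [0,1]$. Starting from $N(R_1+R_2)\le I(W_1;Y_1^N)+I(W_2;Y_2^N)+N\epsilon_N$ and using the fact that conditioning reduces entropy together with the Markov structure induced by feedback, I would split the sum-rate bound into two pieces per symbol: one of the form $h(Y_k\mid X_k) - h(Z)$ (yielding the first logarithm, which is the per-user rate when the interfering signal, partially cancelled by cooperation of strength $\rho$, is treated as noise) and one of the form $h(Y_k) - h(Z)$ after accounting for input correlation (yielding the second logarithm, which is the MISO-style cooperative bound $\log(1+\mathsf{SNR}+\mathsf{INR}+2\rho\sqrt{\mathsf{SNR}\cdot\mathsf{INR}})$). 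Dividing by $2$ because the sum-rate is bounded and we specialize to the symmetric point $R_1=R_2=R_{\sf sym}$ produces the stated bound, and taking a supremum over $\rho$ gives the tightest version.

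For the 1-bit gap I would split into the two regimes selected by the $\max$ in \eqref{eq:SymmetricAchievableRate}. In the strong-interference regime (where the first branch is active), a short calculation starting from $\rho=0$ in $\overline{C}_{\sf sym}$ and using $\mathsf{SNR}\le\mathsf{INR}$ should show $\overline{C}_{\sf sym}-\tfrac{1}{2}\log(1+\mathsf{INR})\le 1$. In the weak/moderate regime I would plug $\rho=0$ (or a natural choice such as $\rho^2=1/(1+\mathsf{INR})$ matching the achievability) into $\overline{C}_{\sf sym}$, bound it by a closed-form expression, and compare with the second branch of $R_{\sf sym}$; the algebra is designed so that the ratio inside the two logarithms differs by at most a factor of $2$, giving the 1-bit gap. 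The worst case is $\mathsf{SNR}\approx\mathsf{INR}\to\infty$, consistent with Fig.~\ref{fig:Gap_Proposed}.

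The main obstacle I expect is the outer bound: specifically, producing the precise two-term decomposition and correctly handling the conditioning in the presence of feedback, so that one genuinely ends up with the cooperative MISO term $\log(1+\mathsf{SNR}+\mathsf{INR}+2\rho\sqrt{\mathsf{SNR}\cdot\mathsf{INR}})$ rather than a looser $\log(1+2(\mathsf{SNR}+\mathsf{INR}))$. This requires carefully identifying auxiliary conditioning variables (for instance conditioning on $(W_1,Y_1^{i-1})$ or on the noiseless version of the interfering input constructed from feedback) so that the covariance calculation cleanly separates into a signal-plus-residual-interference term and a cooperative-sum term. Once that decomposition is obtained, both achievability and the gap computation reduce to bookkeeping with Gaussian mutual information expressions.
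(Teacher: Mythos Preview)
There are two genuine gaps in your plan.

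\textbf{Achievability of the first branch.} You claim that $\tfrac{1}{2}\log(1+\mathsf{INR})$ is achieved by a ``single-stage scheme \ldots\ treating interference as decodable.'' That cannot work: any single-stage (non-feedback) strategy in strong interference is confined to the ordinary IC region, where the symmetric rate is at most $\min\{\log(1+\mathsf{SNR}),\tfrac{1}{2}\log(1+\mathsf{SNR}+\mathsf{INR})\}$, which is strictly smaller than $\tfrac{1}{2}\log(1+\mathsf{INR})$ once $\mathsf{INR}\gg\mathsf{SNR}$. The factor $\tfrac{1}{2}$ in the target is not from interference decoding; it is the penalty for using \emph{two} blocks with feedback. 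In the paper's scheme, block~1 carries fresh codewords $X_1^N,X_2^N$; after feedback each transmitter subtracts its own signal to obtain $g_c X_j^N+Z_k^N$ and decodes the other user's codeword (this is where the constraint $R\le\tfrac{1}{2}\log(1+\mathsf{INR})$ appears); in block~2 the transmitters send the Alamouti conjugates $X_2^{N*}$ and $-X_1^{N*}$, making the two blocks orthogonal at each receiver. Relatedly, for the second branch you write that ``each transmitter learns the interfering codeword exactly.'' It does not: it learns $g_c X_j+Z_k$. The paper's scheme that yields the second expression is amplify-and-forward of this \emph{noisy} term (normalized by $\sqrt{1+\mathsf{INR}}$) with Alamouti conjugation; decoding the interferer first would reinstate a rate penalty and not give the stated formula.

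\textbf{Outer bound.} Your proposed decomposition, with one piece of the form $h(Y_k\mid X_k)-h(Z)$, gives $\log(1+(1-\rho^2)\mathsf{INR})$, which is not the first logarithm in \eqref{eq:SymmetricUpperBound}. The correct single-letter term is $h(Y_1\mid X_2,S_1)-h(Z_1)$ with the \emph{noisy} genie $S_1\triangleq g_{12}X_1+Z_2$; this is precisely what produces $\log\bigl(1+\tfrac{(1-\rho^2)\mathsf{SNR}}{1+(1-\rho^2)\mathsf{INR}}\bigr)$. The paper obtains this by giving receiver~1 the side information $(S_1^N,W_2)$ and then using three functional identities peculiar to feedback: $h(Y_1^N,S_1^N\mid W_1,W_2)=\sum[h(Z_{1i})+h(Z_{2i})]$, $h(S_1^N\mid W_2)=h(Y_2^N\mid W_2)$, and the fact that $X_2^i$ is a function of $(W_2,S_1^{i-1})$. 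These steps are what collapse the multi-letter expression into the two displayed logarithms; the conditioning candidates you list (on $(W_1,Y_1^{i-1})$ or on a ``noiseless version of the interfering input'') do not: the first is too weak and the second is too strong a genie and would yield a loose bound. You correctly anticipate that the outer bound is the hard part, but the specific auxiliary variable $S_1=g_{12}X_1+Z_2$ and the accompanying claims are the missing ingredient.

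For the gap, the paper does not split into regimes; it compares the upper bound (loosened by taking $\rho=0$ in the first log and $\rho=1$ in the second) against only the second branch of $R_{\sf sym}$ and shows $2(\overline{C}_{\sf sym}-R_{\sf sym})\le 2$ by elementary inequalities. Your regime-splitting approach could also be made to work, but it is unnecessary once you have the amplify-and-forward rate in closed form.
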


\begin{figure}[t]
\begin{center}
{\epsfig{figure=./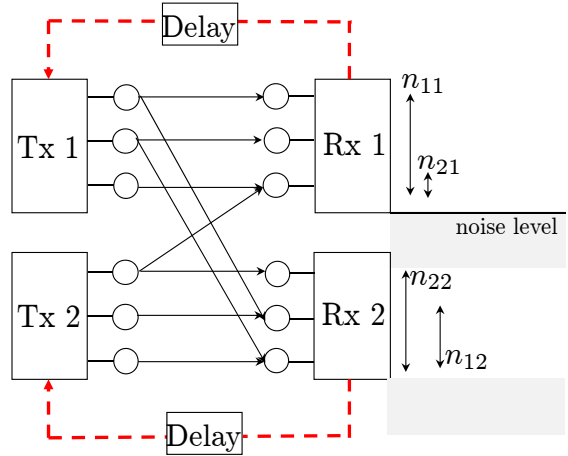, angle=0, width=0.45\textwidth}}
\end{center}
\caption{The deterministic IC with feedback} \label{fig:DIC}
\end{figure}


\subsection{Deterministic Model}
As a stepping stone towards the Gaussian IC, we use an intermediate model: a linear deterministic model~\cite{Salman:allterton07}, illustrated in Fig.~\ref{fig:DIC}. This model is useful in the non-feedback Gaussian IC: it was shown in~\cite{bresler:europe} that the deterministic IC can approximate the Gaussian IC to within a constant number of bits irrespective of the channel parameter values. Our approach is to first develop insights from this model and then translate them to the Gaussian channel.

The connection with the Gaussian channel is as follows. The deterministic IC is characterized by four values: $n_{11}, n_{12}, n_{21}$ and $n_{22}$ where $n_{ij}$ indicates the number of signal bit levels (or resource levels) from transmitter $i$ to receiver $j$.
These values correspond to channel gains in dB scale, i.e., $\forall i \neq j$,
\begin{align}
\label{eq:DIC_GIC_Connection}
n_{ii} =  \lfloor  \log  \mathsf{SNR}_{i} \rfloor, \;n_{ij} =  \lfloor  \log \mathsf{INR}_{ij} \rfloor.
\end{align}
In the symmetric channel, $n \triangleq n_{11}=n_{22}$ and $m \triangleq n_{12} = n_{21}$.
Signal bit levels at a receiver can be mapped to the binary streams of a signal above the noise level. Upper signal levels correspond to most significant bits and lower signal levels correspond to least significant bits. A signal bit level observed by both the receivers above the noise level is broadcasted.
If multiple signal levels arrive at the same signal level at a receiver, we assume a modulo-2-addition.

\subsection{Achievable Scheme for the Deterministic IC}

\textbf{Strong Interference Regime ($m \geq n$):}
\begin{figure}[t]
\begin{center}
{\epsfig{figure=./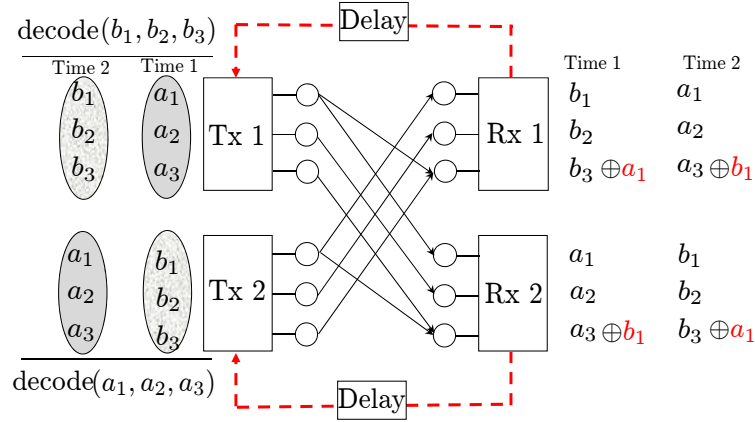, angle=0, width=0.6\textwidth}}
\end{center}
\caption{An achievable scheme for the deterministic IC: strong interference regime $\alpha:=\frac{m}{n}=3$.} \label{fig:Symmetric_Strong_DIC}
\end{figure}
We explain the scheme through the simple example of $\alpha:=\frac{m}{n}=3$, illustrated in Fig.~\ref{fig:Symmetric_Strong_DIC}. Note that each receiver can see only one signal level from its corresponding transmitter. Therefore, in the non-feedback case, each transmitter can send only 1 bit through the top signal level. However, feedback can create a better alternative path, e.g., $[transmitter 1 \rightarrow receiver 2 \rightarrow feedback \rightarrow transmitter 2 \rightarrow receiver 1]$. This alternative path enables to increase the non-feedback rate.

The feedback scheme consists of two stages. In the first stage, transmitters 1 and 2 send independent binary symbols $(a_1, a_2, a_3)$ and $(b_1,b_2,b_3)$, respectively. Each receiver defers decoding to the second stage. In the second stage, using feedback, each transmitter decodes information of the other user: transmitters 1 and 2 decode $(b_1,b_2,b_3)$ and $(a_1, a_2, a_3)$, respectively.
Each transmitter then sends the other user's information. Each receiver gathers the received bits sent during the two stages: the six linearly independent equations containing the six unknown symbols. As a result, each receiver can solve the linear equations to decode its desired bits. Notice that the second stage was used for refining all the bits sent previously, without sending additional information. Therefore, the symmetric rate is $\frac{3}{2}$ in this example. Notice the 50\% improvement from the non-feedback rate of 1. We can easily extend the scheme to arbitrary $(n,m)$. In the first stage, each
transmitter sends $m$ bits using all the signal levels. Using two stages,
these $m$ bits can be decoded with the help of feedback.
Thus, we can achieve:
\begin{align}
\label{eq:DICSymmStrong}
R_{\sf sym} = \frac{m}{2}.
\end{align}

\begin{remark}
The gain in the strong interference regime comes from the fact that feedback provides a better alternative path through the two cross links. The cross links relay the other user's information through feedback. We can also explain this gain using a resource hole interpretation. Notice that in the non-feedback case, each transmitter can send only 1 bit through the top level and therefore there is a resource hole (in the second level) at each receiver. However, with feedback, all of the resource levels at the two receivers can be filled up. Feedback maximizes resource utilization by providing a better alternative path. This concept coincides with correlation routing in~\cite{Kramer:it02}.
\end{remark}

On the other hand, in the weak interference regime, there is no better alternative path, since the cross links are weaker than the direct links. Nevertheless, it turns out that feedback gain can also be obtained in this regime.

\textbf{Weak Interference Regime ($m \leq n$):}
Let us start by examining the scheme in the non-feedback case. Unlike the strong interference regime, only part of information is visible to the other receiver in the weak interference regime. Hence, information can be split into two parts~\cite{HanKoba:it81}: common $m$ bits  (visible to the other receiver) and private $(n-m)$ bits (invisible to the other receiver). Notice that using common levels causes interference to the other receiver.
Sending 1 bit through a common level consumes a total of 2 levels at the two receivers (say \$2), while using a private level costs only \$1. Because of this, a reasonable achievable scheme is to follow the two steps sequentially: (i) sending all of the cheap $(n-m)$ private bits on the lower levels; (ii) sending some number of common bits on the upper levels. The number of common bits is decided depending on $m$ and $n$.

Consider the simple example of $\alpha = \frac{m}{n}= \frac{1}{2}$, illustrated in Fig.~\ref{fig:Symmetric_Weak} $(a)$. First transmitters 1 and 2 use the cheap private signal levels, respectively. Once the bottom levels are used, however using the top levels is precluded due to a conflict with the private bits already sent, thus each transmitter can send only one bit.

\begin{figure}[t]
\begin{center}
{\epsfig{figure=./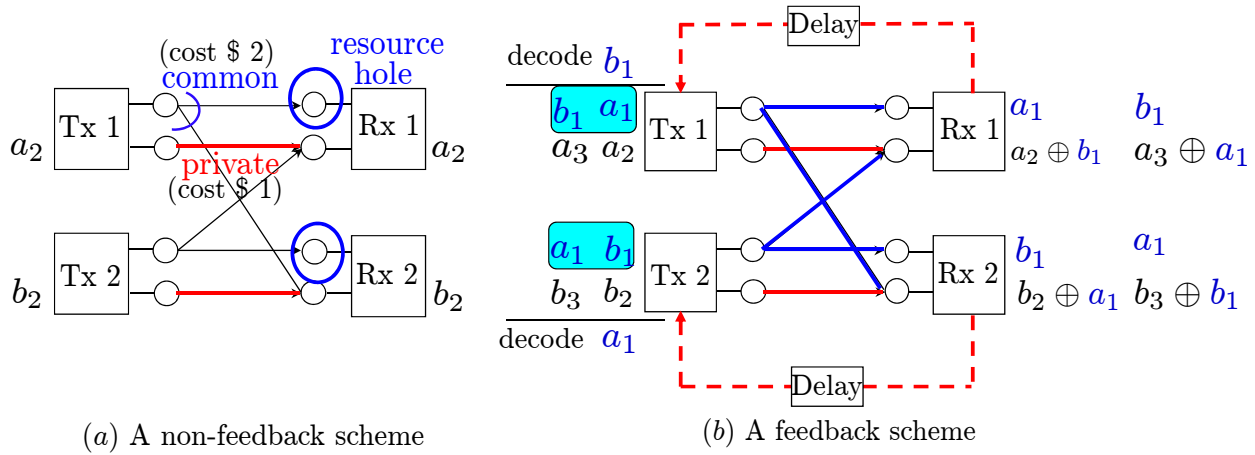, angle=0, width=1.0\textwidth}}
\end{center}
\caption{Achievable schemes for the weak interference regime, e.g., $\alpha= \frac{m}{n}=\frac{1}{2}$.} \label{fig:Symmetric_Weak}
\end{figure}

Observe the two resource holes on the top levels at the two receivers. We find that feedback helps fill up all of these resource holes to improve the performance. The scheme uses two stages. As for the private levels, the same procedure is applied as that in the non-feedback case. How to use the common levels is key to the scheme. 
In the first stage, transmitters 1 and 2 send private bits $a_2$ and $b_2$ on the bottom levels, respectively. Now transmitter 1 squeezes one more bit $a_1$ on its top level. While $a_1$ is received cleanly at receiver 1, it causes interference at receiver 2. Feedback can however resolve this conflict. In the second stage, with feedback transmitter 2 can decode the common bit $a_1$ of the other user.
As for the bottom levels, transmitters 1 and 2 send new private bits $a_3$ and $b_3$, respectively. The idea now is that transmitter 2 sends the other user's common bit $a_1$ on its top level. This transmission allows receiver 2 to refine the corrupted bit $b_2$ from $b_2 \oplus a_1$ without causing interference to receiver 1, since receiver 1 already had the side information of $a_1$ from the previous broadcasting. We paid \$2 for the earlier transmission of $a_1$, but now we can get a \emph{rebate} of \$1. Similarly, with feedback, transmitter 2 can squeeze one more bit $b_1$ on its top level without causing interference. Therefore, we can achieve the symmetric rate of $\frac{3}{2}$ in this example, i.e., the 50\% improvement from the non-feedback rate of 1.

This scheme can be easily generalized to arbitrary $(n,m)$.
In the first stage, each transmitter sends $m$ bits on the upper
levels and $(n-m)$ bits on the lower levels. In the second stage,
each transmitter forwards the $m$ bits of the other user on the upper
levels and sends new $(n-m)$ private bits on the lower levels.
Then, each receiver can decode all of the $n$ bits sent in the
first stage and new $(n- m)$ private bits sent in the second
stage. Therefore, we can achieve:
\begin{align}
\label{eq:DICSymmWeak}
R_{\sf sym}  = \frac{n+(n-m)}{2} = n -
\frac{m}{2}.
\end{align}

\begin{remark}[\textbf{A Resource Hole Interpretation}]
Observe that all the resource levels are fully packed after applying the feedback scheme. Thus, feedback maximizes resource utilization to improve the performance significantly.
\end{remark}

\begin{remark}[\textbf{Exploiting Side Information}]
Another interpretation can be made to explain this gain. Recall that in the non-feedback case, the broadcast nature of the wireless medium precludes us from using the top level for one user when we are already using the bottom level for the other user. In contrast, if feedback is allowed, the top level can be used to improve the non-feedback rate. Suppose that transmitters 1 and 2 send $a_1$ and $b_1$ through their top levels, respectively. Receivers 1 and 2 then get the clean bits $a_1$ and $b_1$, respectively. With feedback, in the second stage, these bits $(a_1,b_1)$ can be exploited as \emph{side information} to refine the corrupted bits.
For example, with feedback transmitter 1 decodes the other user's bit $b_1$ and forwards it through the top level. This transmission allows receiver 1 to refine the corrupted bit $a_2$ from $a_2 \oplus b_1$ without causing interference to receiver
2, since receiver 2 already had the \emph{side information} of $b_1$ from the
previous broadcasting. We exploited the side information with the help of feedback to refine the corrupted bit without causing interference. The exploitation of side information was also observed and pointed out in network coding examples such as the butterfly network~\cite{ahlswede:it}, two-way relay channels~\cite{Wu:05}, general wireless networks~\cite{Katti:SIGCOMM06}, and broadcast erasure channels with feedback~\cite{Tassiulas:NetCom09}.
\end{remark}

\subsection{Optimality of the Achievable Scheme for the Deterministic IC}
\label{sec:Symm_OuterBound}

Now a natural question arises: is the scheme optimal? In this section, using the resource hole interpretation, we provide a positive conjecture on the optimality. Later in Section~\ref{sec:ElGamal-Costa}, we will provide a rigorous proof to settle this conjecture.




\textbf{From W to V Curve:} Fig.~\ref{fig:Symmetric_ResourceSharing} shows (i) the symmetric feedback rate~(\ref{eq:DICSymmStrong}), (\ref{eq:DICSymmWeak}) of the achievable scheme (representing the ``V'' curve); (ii) the non-feedback capacity~\cite{bresler:europe} (representing the ``W'' curve). Using the resource hole interpretation, we will provide intuition as to how we can go from the W curve to the V curve with feedback.

\begin{figure}[t]
\begin{center}
{\epsfig{figure=./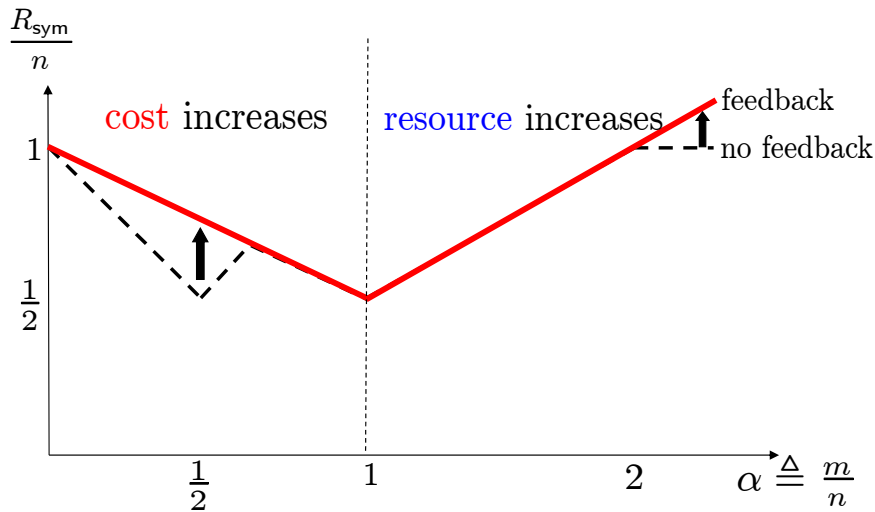, angle=0, width=0.7\textwidth}}
\end{center}
\caption{Symmetric feedback rate (\ref{eq:DICSymmStrong}), (\ref{eq:DICSymmWeak}) for the deterministic IC. Feedback maximizes resource utilization while it cannot reduce cost. The ``V'' curve is obtained when all of the resource levels are fully packed with feedback. This shows the optimality of the feedback scheme.} \label{fig:Symmetric_ResourceSharing}
\end{figure}

Observe that the total number of resource levels and transmission cost depend on $(n,m)$.
Specifically, suppose that the two senders employ the same transmission strategy to achieve the symmetric rate: using $x$ private and $y$ common levels. We then get:
\begin{align}
\begin{split}
&\textrm{\# of resource levels at each receiver}= \max (n,m ),\\
&\textrm{transmission cost}= 1\times x + 2 \times y.
\end{split}
\end{align}
Here notice that using a private level costs 1 level, while using a common level costs 2 levels. Now observe that as $\alpha=\frac{m}{n}$ grows: for $0 \leq \alpha \leq 1$, transmission cost increases; for $\alpha \geq 1$, the number of resource levels increases. Since all the resource levels are fully utilized with feedback, this observation implies that with feedback a total number of transmission bits must decrease when $0 \leq \alpha \leq 1$ (inversely proportional to transmission cost) and must increase when $\alpha \geq 1$ (proportional to the number of resource levels). This is reflected in the V curve. In contrast, in the non-feedback case, for some range of $\alpha$, resource levels are not fully utilized, as shown in the $\alpha=\frac{1}{2}$ example of Fig.~\ref{fig:Symmetric_Weak} $(a)$. This is reflected in the W curve.

\textbf{Why We Cannot Go Beyond the V Curve:}
While feedback maximizes resource utilization to fill up all of the resource holes, \emph{it cannot reduce transmission cost}. To see this, consider the example in Fig.~\ref{fig:Symmetric_Weak} $(b)$. Observe that even with feedback, a common bit still has to consume two levels at the two receivers. For example, the common bit $a_1$ needs to occupy the top level at receiver 1 in time 1; and the top level at receiver 2 in time 2. In time 1, while $a_1$ is received cleanly at receiver 1, it interferes with the private bit $b_2$. In order to refine $b_2$, receiver 2 needs to get $a_1$ cleanly and therefore needs to reserve one resource level for $a_1$. Thus, in order not to interfere with the private bit $b_1$, the common bit $a_1$ needs to consume a total of the two resource levels at the two receivers.
As mentioned earlier, assuming that transmission cost is not reduced, a total number of transmission bits is reflected in the V curve.
As a result, we cannot go beyond the ``V'' curve with feedback, showing the optimality of the achievable scheme. Later in Section~\ref{sec:ElGamal-Costa}, we will prove this rigorously.

\subsection{An Achievable Scheme for the Gaussian IC}

\begin{figure}[t]
\begin{center}
{\epsfig{figure=./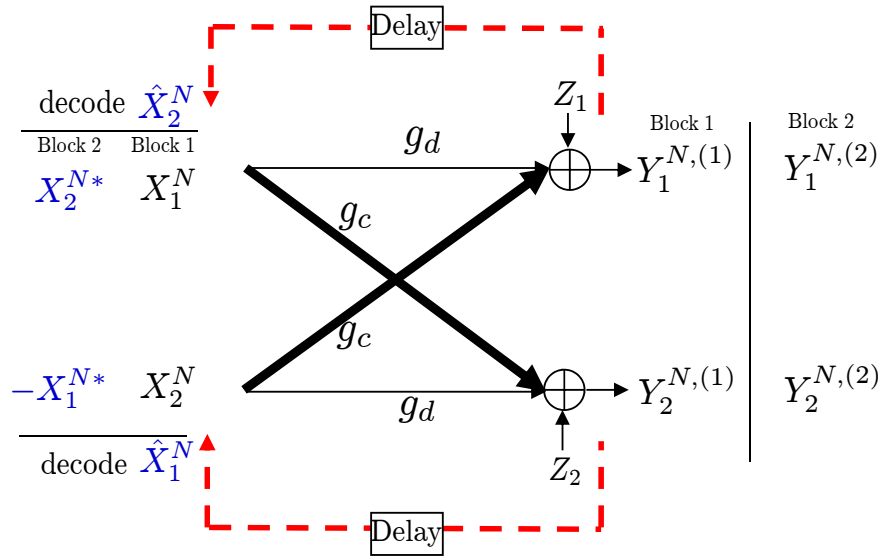, angle=0, width=0.7\textwidth}}
\end{center}
\caption{An Alamouti-based achievable scheme for the Gaussian IC: strong interference regime} \label{fig:Symmetric_Strong_Gaussian}
\end{figure}

Let us go back to the Gaussian channel. We will translate
the deterministic IC scheme to the Gaussian IC. Let us first consider the strong interference regime.

\textbf{Strong Interference Regime (${\sf INR} \geq  {\sf SNR}$):}
The structure of the transmitted signals in Fig.~\ref{fig:Symmetric_Strong_DIC} sheds some light on the Gaussian channel. Observe that in the second stage, each transmitter sends the other user's information sent in the first stage. This reminds us of \emph{Alamouti's scheme} \cite{Alamouti:jsac98}. The beauty of Alamouti's scheme is that received signals can be designed to be orthogonal during two time slots, although the signals in the first time slot are sent without any coding. This was exploited and pointed out in distributed space-time codes~\cite{Laneman:it03}. With Alamouti's scheme, transmitters are able to encode their messages so that received signals are orthogonal. Orthogonality between the two different signals guarantees complete removal of the interfering signal.

In accordance with the deterministic IC example, the scheme uses two stages (or blocks). In the first stage, transmitters 1 and 2 send codewords $X_{1}^{N}$ and $X_{2}^{N}$ with rates $R_1$ and $R_2$, respectively. In the second stage, using feedback, transmitters 1 and 2 decode $X_{2}^{N}$ and $X_{1}^{N}$, respectively.
This can be decoded if
\begin{align}
\label{eq:strongR_2c_constraint}
R_1, R_{2}  \leq \frac{1}{2} \log  \left( 1 + \mathsf{INR} \right) \;\;\textrm{bits/s/Hz}.
\end{align}
We are now ready to apply Alamouti's scheme. Transmitters 1 and 2 send $X_{2}^{N*}$ and $-X_{1}^{N*}$, respectively. Receiver 1 can then gather the two received signals: for $1 \leq i \leq N$,
\begin{align}
\left[
  \begin{array}{c}
    Y_{1i}^{(1)} \\
    Y_{1i}^{(2)*} \\
  \end{array}
\right]
= &\left[
    \begin{array}{cc}
      g_d & g_c \\
      -g_c^{*} & g_d^{*} \\
    \end{array}
  \right]
\left[
  \begin{array}{c}
    X_{1i} \\
    X_{2i} \\
  \end{array}
\right] +  \left[
  \begin{array}{c}
    Z_{1i}^{(1)} \\
    Z_{1i}^{(2)*} \\
  \end{array}
\right].
\end{align}
To extract $X_{1i}$, it multiplies the row vector orthogonal to the vector associated with $X_{2i}$ and therefore we get:
\begin{align}
\left[ \begin{array}{cc}
         g_d^{*} & -g_c
       \end{array}
 \right] \left[
  \begin{array}{c}
    Y_{1i}^{(1)} \\
    Y_{1i}^{(2)*} \\
  \end{array}
\right] = (|g_d|^2 + |g_c|^2) X_{1i} + g_d^* Z_{2i}^{(1)} - g_c Z_{1i}^{(2)*}.
\end{align}
The codeword $X_{1}^{N}$ can be decoded if
\begin{align}
\label{eq:strongR_1c_constraint}
R_{1} \leq \frac{1}{2} \log \left( 1 + \mathsf{SNR} +  \mathsf{INR} \right) \;\; \textrm{bits/s/Hz}.
\end{align}
Similar operations are done at receiver 2. Since  (\ref{eq:strongR_1c_constraint}) is implied by (\ref{eq:strongR_2c_constraint}), we get the desired result: the left term in (\ref{eq:SymmetricAchievableRate}).

\textbf{Weak Interference Regime (${\sf INR} \leq {\sf SNR}$):}
Unlike the strong interference regime, in the weak interference regime, there are two types of information: common and private information. A natural idea is to apply Alamouti's scheme only for common information and newly add private information. It was shown in~\cite{SuhTse:arix09} that this scheme can approximate the symmetric capacity to within $\approx 1.7$ bits/s/Hz. However, the scheme can be improved to reduce the gap further. Unlike the deterministic IC, in the Gaussian IC, private signals have some effects, i.e., these private signals cannot be completely ignored. Notice that the scheme includes \emph{decode-and-forward} operation at the transmitters after receiving the feedback. And so when each transmitter decodes the other user's common message while treating the other user's private signals as noise, private signals can incur performance loss.



This can be avoided by instead performing \emph{amplify-and-forward}: with feedback, the transmitters get the interference plus noise and then forward it subject to the power constraints. This transmission allows each receiver to refine its corrupted signal sent in the previous time, without causing significant interference.\footnote{In Appendix~\ref{Appendix:AchiSch_Symmetric}, we provide intuition behind this scheme.} Importantly, notice that this scheme does not require message-splitting. Even without splitting messages, we can refine the corrupted signals (see Appendix~\ref{Appendix:AchiSch_Symmetric} to understand this better). Therefore, there is no loss due to private signals.

Specifically, the scheme uses two stages. In the first stage, each transmitter $k$ sends codeword $X_k^{N}$ with rate $R_k$. In the second stage, with feedback transmitter 1 gets the interference plus noise:
\begin{align}
S_2^{N} = g_c X_2^{N} + Z_1^{(1),N}.
\end{align}
Now the complex conjugate technique based on Alamouti's scheme is applied to make $X_1^{N}$ and $S_2^{N}$ well separable.
Transmitters 1 and 2 send $\frac{S_2^{N*}}{\sqrt{1+ \sf INR}}$ and $-\frac{S_1^{N*}}{\sqrt{1+ \sf INR}}$, respectively, where $\sqrt{1+ \sf INR}$ is a normalization factor to meet the power constraint. Under Gaussian input distribution, we can compute the rate under MMSE demodulation: $\frac{1}{2} I(X_{1i}; Y_{1i}^{(1)}, Y_{2i}^{(2)})$.
Straightforward calculations give the desired result: the right term in (\ref{eq:SymmetricAchievableRate}). See Appendix~\ref{Appendix:AchiSch_Symmetric} for detailed computations.

\begin{remark}[\textbf{Amplify-and-Forward Reduces the Gap Further}]
\label{remark:AFbetter}
As mentioned earlier, unlike the decode-and-forward scheme, the amplify-and-forward scheme does not require message-splitting, thereby removing the effect of private signals. This improves the performance to reduce the gap further.
\end{remark}


\subsection{An Outer Bound}
Due to the overlap with the outer bound for the capacity region, we defer the proof to Theorem~\ref{theorem:outerbound} in Section~\ref{sec:OuterBoundRegion}.

\subsection{One-Bit Gap to the Symmetric Capacity}

Using the symmetric rate of~(\ref{eq:SymmetricAchievableRate}) and the outer bound of~(\ref{eq:SymmetricUpperBound}), we get:

\begin{align}
\begin{split}
2(\bar{C}_{\mathsf{sym}} - R_{\mathsf{sym}}) & \overset{(a)}{\leq}  \log \left(1 + \frac{\mathsf{SNR}}{ 1 + \mathsf{INR}} \right) + \log \left( 1 + \mathsf{SNR} + \mathsf{INR} + 2 \sqrt{\mathsf{SNR} \cdot \mathsf{INR} } \right) \\
& -\log \left(  \frac{(1+\mathsf{SNR}+\mathsf{INR})^2 - \frac{\mathsf{SNR}}{1+\mathsf{INR}} }{1+2\mathsf{INR}} \right) \\
&=  \log \left(  \frac{1+\mathsf{SNR}+\mathsf{INR}}{1+\mathsf{INR}} \cdot  \left( 1 + \mathsf{SNR} + \mathsf{INR} + 2 \sqrt{\mathsf{SNR} \cdot \mathsf{INR} } \right)  \right) \\
& + \log \left(  \frac{(1+2\mathsf{INR})(1+\mathsf{INR} ) }{(1+\mathsf{SNR}+\mathsf{INR})^2 (1+\mathsf{INR} ) - \mathsf{SNR}} \right) \\
&=  \log \left(  \frac{1+\mathsf{SNR}+\mathsf{INR} + 2 \sqrt{\mathsf{SNR} \cdot \mathsf{INR} } }{1+\mathsf{SNR}+ \mathsf{INR}} \cdot  \frac{1+2\mathsf{INR}}{1+\mathsf{INR} -\frac{\mathsf{SNR}}{(1+\mathsf{SNR}+\mathsf{INR})^2}  }   \right) \\
& \overset{(b)}{\leq} \log \left(  2 \cdot   \frac{ 2 \left(1+\mathsf{INR} - \frac{\mathsf{SNR}}{(1+\mathsf{SNR}+\mathsf{INR})^2} \right) -1 + \frac{2\mathsf{SNR}}{(1+\mathsf{SNR}+\mathsf{INR})^2} }{1+\mathsf{INR}-\frac{\mathsf{SNR}}{(1+\mathsf{SNR}+\mathsf{INR})^2}}    \right) \\
& =  \log \left(  2 \cdot  \left \{  2 -  \left(  \frac{1-\frac{2\mathsf{SNR}}{(1+\mathsf{SNR}+\mathsf{INR})^2} }{1+\mathsf{INR}-\frac{\mathsf{SNR}}{(1+\mathsf{SNR}+\mathsf{INR})^2}}  \right) \right\} \right) \\
& \overset{(c)}{\leq}  \log 4 = 2,
\end{split}
\end{align}
where $(a)$ follows from choosing trivial maximum and minimum values of the outer bound~(\ref{eq:SymmetricUpperBound}) and the lower bound~(\ref{eq:SymmetricAchievableRate}), respectively; $(b)$ follows from $1 + {\sf SNR} + {\sf INR} + 2 \sqrt{ \sf SNR \cdot \sf INR} \leq 2 (1 + {\sf SNR} + {\sf INR})$; and $(c)$ follows from $(1 + {\sf SNR} + {\sf INR})^2 \geq 2 {\sf SNR}$ and $\frac{\sf SNR}{ (1 + {\sf SNR} + {\sf INR} )^2 } \leq 1$.

Fig. \ref{fig:Gap_Proposed} shows a numerical result for the gap between the inner and outer bounds. Notice that the gap is upper-bounded by exactly one bit. The worst-case gap occurs when ${\sf SNR } \approx {\sf INR}$ and these values go to infinity. Also note that in the strong interference regime, the gap approaches 0 with the increase of $\sf SNR$ and $\sf INR$, while in the weak interference regime, the gap does not vanish. For example, when $\alpha = \frac{1}{2}$, the gap is around 0.5 bits.

\begin{remark}[\textbf{Why does a 1-bit gap occur?}] Observe in Figs.~\ref{fig:Symmetric_Strong_Gaussian} and~\ref{fig:Symmetric_Unified} that  the transmitted signals of the two senders are \emph{uncorrelated} in our scheme. The scheme completely loses power gain (also called beamforming gain). On the other hand, when deriving the outer bound of~(\ref{eq:SymmetricUpperBound}), we allow for arbitrary correlation between the transmitters. Thus, the 1-bit gap is based on \emph{the outer bound}. In the actual system, correlation is in-between and therefore one can expect that an actual gap to the capacity is less than 1 bit.

Beamforming gain is important only when $\sf SNR$ and $\sf INR$ are quite close, i.e., $\alpha \approx 1$. This is because when $\alpha =1$, the interference channel is equivalent to the multiple access channel where Ozarow's scheme~\cite{Ozarow:it} and Kramer's scheme~\cite{Kramer:it02} (that capture beamforming gain) are optimal. In fact, the capacity theorem in~\cite{Kramer:it04} has shown that Kramer's scheme is optimal for one specific case of ${\sf INR} = {\sf SNR} - \sqrt{2 {\sf  SNR} }$, although it is arbitrarily far from optimality for the other cases. This observation implies that our proposed scheme can be improved further.
\end{remark}

\section{Capacity Region to Within 2 Bits}
\label{sec:FB_CapacityRegion}

\subsection{An Achievable Rate Region}
We have developed an achievable scheme meant for the symmetric rate and provided a resource hole interpretation. For the case of the capacity region, we find that while this interpretation can also be useful, the two-staged scheme cannot be applied. A new achievable scheme needs to be developed for the region characterization.

To see this, let us consider a deterministic IC example in Fig.~\ref{fig_AchievabilityIdea} where an infinite number of stages need to be employed to achieve a corner point of $(2,1)$ with feedback.
\begin{figure}[!htp]
\begin{center}
{\epsfig{figure=./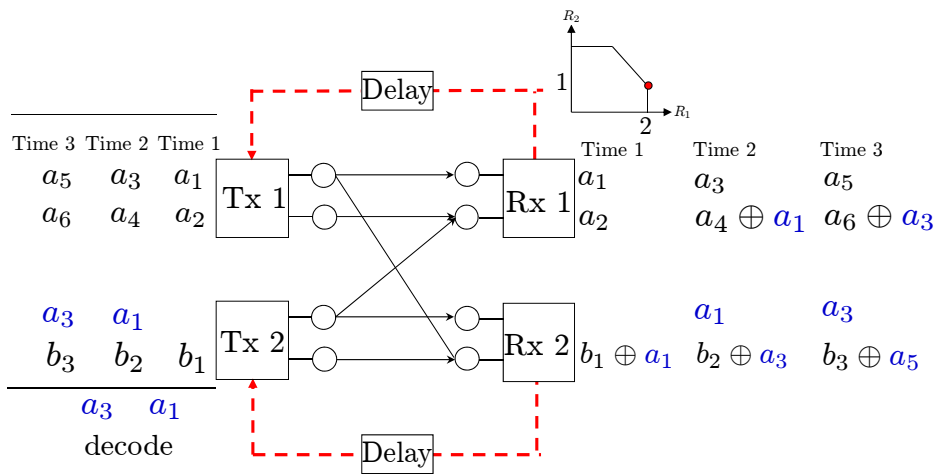, angle=0, width=0.75\textwidth}}
\end{center}
\caption{A deterministic IC example where an infinite number of stages need to be employed to achieve the rate pair of $(2,1)$ with feedback.} \label{fig_AchievabilityIdea}
\end{figure}
Observe that to guarantee $R_1 = 2$, transmitter 1 needs to send 2 bits every time slot. Once transmitter 1 sends $(a_1,a_2)$, transmitter 2 cannot use its top level since the transmission causes interference to receiver 1. It can use only the bottom level to send information.
This transmission however suffers from interference: receiver 2 gets the interfered signal $b_1 \oplus a_1$. We will show that this corrupted bit can be refined with feedback.
In time 2, transmitter 2 can decode $a_1$ with feedback. In an effort to achieve the rate pair of $(2,1)$, transmitter 1 sends $(a_3,a_4)$ and transmitter 2 sends $b_2$ on the bottom level. Now apply the same idea used in the symmetric case: transmitter 2 sends the other user's information $a_1$ on the top level. This transmission allows receiver 2 to refine the corrupted signal $b_1$ without causing interference to receiver 1, since receiver 1 already had $a_1$ as \emph{side information}. Notice that during the two time slots, receiver 1 can decode 4 bits (2 bits/time), while receiver 2 can decode 1 bits (0.5 bits/time). The point $(2,1)$ is not achieved yet due to unavoidable loss occurred in time 1. This loss, however, can be amortized by iterating the same operation.
As this example shows, the previous two-staged scheme needs to be modified so as to incorporate an infinite number of stages.

Let us apply this idea to the Gaussian channel.
The use of an infinite number of stages motivates the need for employing \emph{block Markov encoding} \cite{Cover:it79, Cover:it81}. Similar to the symmetric case, we can now think of two possible schemes: (1) decode-and-forward (with message-splitting); and (2) amplify-and-forward (without message-splitting).
As pointed out in Remark~\ref{remark:AFbetter}, in the Gaussian channel, private signals cannot be completely ignored, thereby incurring performance loss, thus the amplify-and-forward scheme without message-splitting has better performance.
However, it requires heavy computations to compute the rate region, so we focus on the decode-and-forward scheme, although it induces a larger gap.
As for a decoding operation, we employ backward decoding \cite{Kuhlmann:it89, Tuninetti:isit07}.

Here is the outline of our scheme. We employ block Markov encoding with a total size $B$ of blocks. In block 1, each transmitter splits its own message into common and private parts and then sends a codeword superimposing the common and private messages. For power splitting, we adapt the idea of the simplified Han-Kobayashi scheme~\cite{dtse:it07} where private power is set such that a private signal is seen below the noise level at the other receiver. In block 2, with feedback, each transmitter decodes the other user's common message (sent in block 1) while treating the other user's private signal as noise. Two common messages are then available at the transmitter: (1) its own message; and (2) the other user's message decoded with the help of feedback. Conditioned on these two common messages, each transmitter generates new common and private messages. It then sends the corresponding codeword. Each transmitter repeats this procedure until block $B-1$. In the last block $B$, to facilitate backward decoding, each transmitter sends the predetermined common message and a new private message. Each receiver waits until total $B$ blocks have been received and then performs backward decoding.
We will show that this scheme enables us to obtain an achievable rate region that approximates the capacity region.

\begin{theorem}
\label{theorem:achievableregion}
The feedback capacity region includes the set $\mathcal{R}$ of $(R_1, R_2)$ such that
\begin{align}
\label{eq:achieve_R1_1}    R_1 &\leq \log \left( 1+ \mathsf{SNR}_1  + \mathsf{INR}_{21} + 2 \rho \sqrt{ \mathsf{SNR}_1  \cdot \mathsf{INR}_{21}}
 \right) - 1 \\
 \label{eq:achieve_R1_2}
R_1 & \leq \log \left( 1 +  (1- \rho)  \mathsf{INR}_{12} \right)
 + \log \left( 2 +  \frac{ \mathsf{SNR}_1}{ \mathsf{INR}_{12}} \right) - 2
 \\
 \label{eq:achieve_R2_1}
    R_2 &\leq \log \left( 1+ \mathsf{SNR}_2  + \mathsf{INR}_{12} + 2 \rho \sqrt{ \mathsf{SNR}_2  \cdot \mathsf{INR}_{12}}
 \right) - 1 \\
 \label{eq:achieve_R2_2}
R_2 & \leq \log \left( 1 +  (1- \rho)  \mathsf{INR}_{21} \right)
 + \log \left( 2 +  \frac{ \mathsf{SNR}_2}{ \mathsf{INR}_{21}} \right) - 2
 \\
 \label{eq:achieve_R12_1}
R_1 + R_2 & \leq \log \left( 2 +  \frac{ \mathsf{SNR}_1}{ \mathsf{INR}_{12}} \right)  +  \log \left( 1+ \mathsf{SNR}_2  + \mathsf{INR}_{12} + 2 \rho \sqrt{ \mathsf{SNR}_2  \cdot \mathsf{INR}_{12}}
 \right) - 2 \\
 \label{eq:achieve_R12_2}
R_1 + R_2 & \leq \log \left( 2 +  \frac{ \mathsf{SNR}_2}{ \mathsf{INR}_{21}} \right)  +  \log \left( 1+ \mathsf{SNR}_1  + \mathsf{INR}_{21} + 2 \rho \sqrt{ \mathsf{SNR}_1  \cdot \mathsf{INR}_{21}}
 \right) - 2
\end{align}
for $0 \leq \rho \leq 1$.
\end{theorem}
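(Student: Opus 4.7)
The plan is to construct an explicit block-Markov coding scheme with $B$ blocks, Han-Kobayashi message splitting, decode-and-forward processing of the feedback at the transmitters, and backward decoding at the receivers, and then perform Fourier-Motzkin elimination on the resulting rate constraints to obtain the six inequalities \eqref{eq:achieve_R1_1}--\eqref{eq:achieve_R12_2}. I would carry this out for a fixed but arbitrary $\rho \in [0,1]$ and then send $B \to \infty$ so that the one wasted block (where the common message is frozen to anchor backward decoding) contributes a vanishing rate penalty.

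For the encoder, I would split each message $W_k$ into a common part $W_{kc}$ and a private part $W_{kp}$ of rates $R_{kc}$ and $R_{kp}$, with fresh messages $(W_{kc}^{(b)}, W_{kp}^{(b)})$ per block $b = 1, \ldots, B-1$ and $W_{kc}^{(B)}$ preset. The codebook uses superposition: given the pair $(W_{1c}^{(b-1)}, W_{2c}^{(b-1)})$ of common messages from the previous block (which, crucially, both transmitters share after the feedback step described below), each generates a cloud center $U^{(b)}(W_{1c}^{(b-1)}, W_{2c}^{(b-1)})$ and superimposes a new common codeword $U_k^{(b)}(W_{kc}^{(b)} | U^{(b)})$ and a new private codeword $V_k^{(b)}(W_{kp}^{(b)} | U^{(b)})$. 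The transmitted signal $X_k^{(b)}$ is a Gaussian superposition of $U^{(b)}, U_k^{(b)}, V_k^{(b)}$; the cloud center $U^{(b)}$ is what injects correlation $\rho$ between $X_1^{(b)}$ and $X_2^{(b)}$. Following the simplified Han-Kobayashi recipe from~\cite{dtse:it07}, the private power is chosen so that $\mathsf{INR}_{jk}\,P_{kp} \leq 1$, i.e., the private signal arrives at the unintended receiver roughly at the noise level.

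For the decoders I would use two different decoding operations. At the end of block $b$, each transmitter $k$ subtracts its own known signal from $Y_k^{(b)}$ and decodes the other user's new common message $W_{jc}^{(b)}$ while treating the other user's private signal as additional noise; this is what makes $(W_{1c}^{(b)}, W_{2c}^{(b)})$ common knowledge to both encoders entering block $b+1$, and it yields a decoding constraint of the form $R_{jc} \leq \log\bigl(1 + (1-\rho^2)\mathsf{INR}_{jk}/(1 + \mathsf{INR}_{jk} P_{jp})\bigr)$. At the receivers I would use backward decoding~\cite{Kuhlmann:it89}: starting from block $B$, where $W_{kc}^{(B)}$ is known, receiver $k$ jointly decodes the triple $(W_{1c}^{(b-1)}, W_{2c}^{(b-1)}, W_{kp}^{(b)})$ from $Y_k^{(b)}$ using the previously decoded $(W_{1c}^{(b)}, W_{2c}^{(b)})$ as given. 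This gives a MAC-type family of bounds on sums of subsets of $\{R_{1c}, R_{2c}, R_{kp}\}$ at each receiver, where the $-1$ and $-2$ slack terms in \eqref{eq:achieve_R1_1}--\eqref{eq:achieve_R12_2} arise from (i) the extra factor of $2$ in the effective noise caused by the Han-Kobayashi private power choice $\mathsf{INR}_{jk} P_{jp} \leq 1$, and (ii) the combination of two such bounds when passing from $(R_{kc}, R_{kp})$ to the sum $R_k = R_{kc} + R_{kp}$.

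The main obstacle I anticipate is the Fourier-Motzkin elimination: the feedback step contributes two constraints, backward decoding at each receiver contributes a MAC-style polytope, and one must eliminate $R_{1c}, R_{1p}, R_{2c}, R_{2p}$ and bookkeep which bounds survive so that exactly the six clean inequalities emerge. Related to this is understanding \emph{why} the non-feedback bounds of the form $2R_1 + R_2$ and $R_1 + 2R_2$ do not appear: the feedback-enabled sharing of the common messages through the cloud center $U$ effectively removes the constraints that in the Han-Kobayashi region correspond to decoding both users' common messages twice. A secondary care point is selecting the joint Gaussian input distribution so that it simultaneously realizes correlation $\rho$ between $X_1$ and $X_2$ in the current block (via $U^{(b)}$) and leaves enough independent power budget for the new codewords $U_k^{(b)}, V_k^{(b)}$; once this distribution is fixed, the remaining mutual information computations are standard log-determinant evaluations and should reduce to the expressions on the right-hand sides of \eqref{eq:achieve_R1_1}--\eqref{eq:achieve_R12_2}.
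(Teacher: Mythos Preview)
Your proposal is correct and follows essentially the same route as the paper: a generic block-Markov superposition scheme with common/private splitting, decode-and-forward of the other user's common message via feedback, backward decoding, and Fourier--Motzkin elimination, followed by a Gaussian specialization with the simplified Han--Kobayashi private-power choice. One small calibration to watch: in the paper's parameterization the cloud center $U$ carries power $\rho$ (so the remaining power for $U_k+X_{kp}$ is $1-\rho$, not $1-\rho^2$), which is what produces the $(1-\rho)$ factors in \eqref{eq:achieve_R1_2} and \eqref{eq:achieve_R2_2}.
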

\begin{proof}
Our achievable scheme is generic, not limited to the Gaussian IC. We therefore characterize an achievable rate region for discrete memoryless ICs and then choose an appropriate joint distribution to obtain the desired result. In fact, this generic scheme can also be applied to El Gamal-Costa deterministic IC (to be described in Section~\ref{sec:ElGamal-Costa}).

\begin{lemma}
\label{lemma:feedbackachievable}
The feedback capacity region of the two-user discrete memoryless IC includes the set of  $(R_1,R_2)$ such that
\begin{align}
  R_1 & \leq  I(U,U_2,X_1;Y_1) \\
  R_1 & \leq I(U_1;Y_2|U,X_2) +  I(X_1;Y_1|U_1,U_2,U)  \\
    R_2 &\leq I(U,U_1,X_2;Y_2) \\
  R_2 &\leq I(U_2;Y_1|U,X_1) +  I(X_2;Y_2|U_1,U_2,U)  \\
  R_1 + R_2 &\leq  I(X_1;Y_1|U_1,U_2,U) + I(U,U_1,X_2;Y_2) \\
  R_1 + R_2 & \leq I(X_2;Y_2|U_1,U_2,U) + I(U,U_2,X_1;Y_1),
\end{align}
over all joint distributions $p(u) p(u_1|u) p(u_2|u) p(x_1 |u_1, u) p(x_2 |u_2, u)$.
\end{lemma}
\begin{proof}
See Appendix~\ref{Appendix:lemmaachievable}.
\end{proof}

Now we will choose the following Gaussian input distribution to complete the proof: $\forall k=1,2,$
\begin{align}
U \sim \mathcal{CN}(0,\rho); U_k \sim \mathcal{CN}(0,\lambda_{ck}); X_{pk} \sim \mathcal{CN}(0,\lambda_{pk}),
\end{align}
where $X_k = U + U_k + X_{kp}$; $\lambda_{ck}$ and $\lambda_{pk}$ indicate the powers allocated to the common and private message of transmitter $k$, respectively; and $(U, U_k, X_{kp})$'s are independent.
By symmetry, it suffices to prove~(\ref{eq:achieve_R1_1}), (\ref{eq:achieve_R1_2}) and (\ref{eq:achieve_R12_1}).

To prove~(\ref{eq:achieve_R1_1}), consider $ I(U,U_2,X_1;Y_1) = h(Y_1) - h(Y_1|U,U_2, X_1)$. Note
\begin{align}
|K_{Y_1 |X_1, U_2, U}| = 1+  \lambda_{p2} \mathsf{INR}_{21}.
\end{align}
As mentioned earlier, for power splitting, we adapt the idea of the simplified Han-Kobayashi scheme~\cite{dtse:it07}.
We set private power such that the private signal appears below the noise level at the other receiver. This idea mimics that of the deterministic IC example where the private bit is below the noise level so that it is invisible. The remaining power is assigned to the common message. Specifically, we set:
\begin{align}
\begin{split}
\label{eq:weakpowersplit}
\lambda_{p2} = \min \left( \frac{1}{\mathsf{INR}_{21}}, 1 \right), \;\; \lambda_{c2} = 1- \lambda_{p2},
\end{split}
\end{align}
This choice gives
\begin{align}
\label{eq:mutualY1full}
    I(U,U_2,X_1;Y_1) = \log \left( 1+ \mathsf{SNR}_1  + \mathsf{INR}_{21} + 2 \rho \sqrt{ \mathsf{SNR}_1  \cdot \mathsf{INR}_{21}}
 \right) - 1,
\end{align}
which proves $(\ref{eq:achieve_R1_1})$. With the same power setting, we can compute:
\begin{align}
 & I(U_1;Y_2|U, X_2) = \log \left( 1 +  (1-\rho) \mathsf{INR}_{12} \right) - 1, \\
 \label{eq:mutualInfoprivate}
 &I(X_1;Y_1|U,U_1, U_2) = \log \left( 2 +  \frac{ \mathsf{SNR}_1}{ \mathsf{INR}_{12}} \right) - 1.
\end{align}
This proves $(\ref{eq:achieve_R1_2})$. Lastly, by (\ref{eq:mutualY1full}) and (\ref{eq:mutualInfoprivate}), we prove $(\ref{eq:achieve_R12_1})$.
\end{proof}

\begin{remark}[\textbf{Three Types of Inequalities}] In the non-feedback case, it is shown in~\cite{dtse:it07} that an approximate capacity region is characterized by five types of inequalities including the bounds for $2R_1 + R_2$ and $R_1 + 2R_2$. In contrast, in the feedback case, our achievable rate region is described by only three types of inequalities.\footnote{It is still unknown whether or not the exact feedback capacity region includes only three types of inequalities.} In Section \ref{sec:2R1R2boundmissing}, we will provide qualitative insights as to why the $2R_1 + R_2$ bound is missing with feedback.
\end{remark}

\begin{remark}[Connection to Related Work~\cite{Tuninetti:isit07}] Our achievable scheme is essentially the same as the scheme introduced by Tuninetti~\cite{Tuninetti:isit07} in a sense that the three techniques (message-splitting, block Markov encoding and backward decoding) are jointly employed.\footnote{The author in~\cite{Tuninetti:isit07} considers a different context: the conferencing encoder problem. However, Prabhakaran and Viswanath~\cite{Vinod:arix09} have made an interesting connection between the feedback problem and the conferencing encoder problem. See~\cite{Vinod:arix09} for details.}
However, the scheme in~\cite{Tuninetti:isit07} uses five auxiliary random variables requiring further optimization. On the other hand, we obtain an explicit rate region by reducing those five auxiliary random variables into three and then choosing a joint input distribution appropriately.
\end{remark}

\subsection{An Outer Bound Region}
\label{sec:OuterBoundRegion}

\begin{theorem}
\label{theorem:outerbound}
The feedback capacity region is included by the set $\overline{\mathcal{C}}$ of $(R_1, R_2)$ such that
\begin{align}
\label{eq:outerR1_1}
R_1 & \leq  \log \left( 1+ \mathsf{SNR}_1  + \mathsf{INR}_{21} + 2 \rho \sqrt{ \mathsf{SNR}_1  \cdot \mathsf{INR}_{21}}
 \right) \\
\label{eq:outerR1_2}
R_1 & \leq \log \left( 1 +  (1- \rho^2) \mathsf{INR}_{12} \right)
 + \log \left( 1 +  \frac{ (1- \rho^2) \mathsf{SNR}_1}{ 1 + (1- \rho^2) \mathsf{INR}_{12}} \right)
 \\
 \label{eq:outerR2_1}
R_2 & \leq  \log \left( 1+ \mathsf{SNR}_2  + \mathsf{INR}_{12} + 2 \rho \sqrt{ \mathsf{SNR}_2  \cdot \mathsf{INR}_{12}}
 \right) \\
 \label{eq:outerR2_2}
R_2 & \leq \log \left( 1 +  (1- \rho^2) \mathsf{INR}_{21} \right)
 + \log \left( 1 +  \frac{ (1- \rho^2) \mathsf{SNR}_2}{ 1 + (1- \rho^2) \mathsf{INR}_{21} } \right)
 \\
 \label{eq:outerR1_R2_1}
R_1 + R_2 & \leq \log \left( 1 +  \frac{ (1- \rho^2) \mathsf{SNR}_1}{ 1 + (1- \rho^2) \mathsf{INR}_{12} } \right) +  \log \left( 1+ \mathsf{SNR}_2  + \mathsf{INR}_{12} + 2 \rho \sqrt{ \mathsf{SNR}_2  \cdot \mathsf{INR}_{12} }
 \right) \\
 \label{eq:outerR1_R2_2}
R_ 1 + R_2 & \leq \log \left( 1 +  \frac{ (1- \rho^2) \mathsf{SNR}_2}{ 1 + (1- \rho^2) \mathsf{INR}_{21} } \right) +  \log \left( 1+ \mathsf{SNR}_1  + \mathsf{INR}_{21} + 2 \rho \sqrt{ \mathsf{SNR}_1  \cdot \mathsf{INR}_{21} }
 \right)
\end{align}
for $0 \leq \rho \leq 1$.
\end{theorem}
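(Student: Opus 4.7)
The plan is to establish the six inequalities in Theorem~\ref{theorem:outerbound} using Fano's inequality together with carefully chosen genie side information, followed by Gaussian entropy maximization. By the user-swap symmetry $(1 \leftrightarrow 2)$, it suffices to derive (\ref{eq:outerR1_1}), (\ref{eq:outerR1_2}), and (\ref{eq:outerR1_R2_1}); the remaining three follow by relabeling. Throughout, the correlation parameter $\rho \in [0,1]$ will be identified with the time-averaged cross-correlation $\tfrac{1}{N}\sum_{i=1}^N |E[X_{1,i}X_{2,i}^*]|$ between the two codewords, which is the only source of encoder coupling induced by feedback, and Jensen's inequality applied to the concave $\log$ expressions will let the same $\rho$ appear in all six inequalities so that the supremum over $\rho$ in the theorem statement is meaningful.

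For the cut-set bound (\ref{eq:outerR1_1}), I would enlarge $I(W_1;Y_1^N)$ to $I(X_1^N, X_2^N; Y_1^N)$, corresponding to allowing the two encoders to cooperate fully toward receiver 1, then apply the chain rule $\sum_i h(Y_{1,i}\mid Y_1^{i-1}) \leq \sum_i h(Y_{1,i})$ and maximize each per-letter entropy by its Gaussian bound. Consolidating the per-letter cross-correlations $\rho_i = E[X_{1,i}X_{2,i}^*]$ into a single $\rho$ via Jensen's inequality produces $\log(1 + \mathsf{SNR}_1 + \mathsf{INR}_{21} + 2\rho\sqrt{\mathsf{SNR}_1 \cdot \mathsf{INR}_{21}})$. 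For the single-rate bound (\ref{eq:outerR1_2}), the plan is a genie-aided argument: hand receiver 1 the auxiliary side information $(W_2, Y_2^N)$. Because $X_{2,i}$ is a deterministic function of $(W_2, Y_2^{i-1})$, this lets receiver 1 reconstruct $X_2^N$ and subtract it from both $Y_1^N$ and $Y_2^N$, yielding a cooperative SIMO channel $(g_{11}X_{1,i} + Z_{1,i},\; g_{12}X_{1,i} + Z_{2,i})$ with combined gain $\mathsf{SNR}_1 + \mathsf{INR}_{12}$. Its capacity is bounded by a Gaussian maximizer whose per-letter input variance is the conditional variance $\mathrm{Var}(X_{1,i}\mid W_2, Y_2^{i-1}) \leq 1 - \rho_i^2$, giving $R_1 \leq \log(1 + (1-\rho^2)(\mathsf{SNR}_1 + \mathsf{INR}_{12}))$, which after the algebraic identity $(1+a)(1 + b/(1+a)) = 1 + a + b$ is exactly the sum of the two logarithms in (\ref{eq:outerR1_2}).

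For the sum-rate bound (\ref{eq:outerR1_R2_1}), I would apply Fano to the two rates separately and combine two different genie reductions. For $R_2$, use the same cut-set/cooperation bound as above but at receiver 2, giving the second log term $\log(1 + \mathsf{SNR}_2 + \mathsf{INR}_{12} + 2\rho\sqrt{\mathsf{SNR}_2 \cdot \mathsf{INR}_{12}})$. For $R_1$, provide only $W_2$ as genie to receiver 1 (without $Y_2^N$), so only the direct link is usable; the residual component of $X_1$ that cannot be removed from the cross channel acts as additional noise of variance $1 + (1-\rho^2)\mathsf{INR}_{12}$, leaving $\log(1 + (1-\rho^2)\mathsf{SNR}_1 / (1 + (1-\rho^2)\mathsf{INR}_{12}))$ for $R_1$. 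Both bounds are proved with the same per-letter $\rho_i$'s, which consolidate into a single $\rho$ upon aggregation, allowing them to be summed.

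The hardest step is justifying the conditional variance bound $\mathrm{Var}(X_{1,i}\mid W_2, Y_2^{i-1}) \leq 1 - \rho_i^2$ under feedback: because $X_{1,i}$ depends on $Y_1^{i-1}$, which in turn depends on $X_2^{i-1}$ and hence on $W_2$, the feedback loop genuinely correlates $X_{1,i}$ with $W_2$, and one must use only the per-letter power constraint $E[|X_{1,i}|^2] \leq 1$ together with the Cauchy--Schwarz inequality to bound the conditional covariance. A secondary subtlety is ensuring consistency of $\rho$ across all six inequalities, since the $\rho_i$'s are defined independently of which bound is being proved but must aggregate into the same $\rho$ in every bound; this is handled cleanly by applying Jensen's inequality to each concave $\log$ expression once the per-letter covariance matrix has been isolated.
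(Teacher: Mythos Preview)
Your treatments of (\ref{eq:outerR1_1}) and (\ref{eq:outerR1_2}) are essentially fine and match the paper's cut-set arguments. The serious gap is in the sum-rate bound (\ref{eq:outerR1_R2_1}).

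Your plan is to bound $R_1$ and $R_2$ \emph{separately} and then add: a cut-set bound on $R_2$ at receiver~2, and a genie bound on $R_1$ obtained by giving only $W_2$ to receiver~1. This cannot work, for two reasons. First, with feedback, knowing $W_2$ alone does \emph{not} let receiver~1 subtract the interference $g_{21}X_{2,i}$: $X_{2,i}$ is a function of $(W_2,Y_2^{i-1})$, and receiver~1 does not have $Y_2^{i-1}$. So your description ``only the direct link is usable; the residual component of $X_1$ \ldots\ acts as additional noise'' does not correspond to any valid manipulation of $I(W_1;Y_1^N\mid W_2)$. Second, and more fundamentally, the first term in (\ref{eq:outerR1_R2_1}), namely $\log\bigl(1+\tfrac{(1-\rho^2)\mathsf{SNR}_1}{1+(1-\rho^2)\mathsf{INR}_{12}}\bigr)$, is \emph{not} a valid standalone upper bound on $R_1$; it can be far below $\log(1+\mathsf{SNR}_1)$, which is clearly achievable. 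Since the per-term bound is false, the sum cannot be obtained by adding two separate bounds.

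The paper's proof of (\ref{eq:outerR1_R2_1}) hinges on a cancellation you are missing. One gives receiver~1 the genie $(W_2,S_1^N)$ with $S_{1,i}=g_{12}X_{1,i}+Z_{2,i}$ (the interference-plus-noise seen at receiver~2), writes
\[
N(R_1+R_2-\epsilon_N)\le h(Y_1^N\mid S_1^N,W_2)+h(S_1^N\mid W_2)-\sum_i[h(Z_{1i})+h(Z_{2i})]+h(Y_2^N)-h(Y_2^N\mid W_2),
\]
and then uses the key identity $h(S_1^N\mid W_2)=h(Y_2^N\mid W_2)$ (which holds because $Y_{2,i}$ and $S_{1,i}$ differ only by $g_{22}X_{2,i}$, a function of $(W_2,Y_2^{i-1})$, hence of $(W_2,S_1^{i-1})$). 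The two conditional entropies cancel exactly, leaving $h(Y_1^N\mid S_1^N,W_2,X_2^N)+h(Y_2^N)$ minus noise entropies, which single-letterizes to $h(Y_1\mid S_1,X_2)+h(Y_2)-h(Z_1)-h(Z_2)$ and yields (\ref{eq:outerR1_R2_1}). The choice $S_1$ rather than $X_1$ is essential: giving $X_1$ would be too much. Your proposal never introduces $S_1$ and has no mechanism for this cancellation, so the sum-rate argument would not go through.
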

\begin{proof}
By symmetry, it suffices to prove the bounds of~(\ref{eq:outerR1_1}), (\ref{eq:outerR1_2}) and (\ref{eq:outerR1_R2_1}). The bounds of (\ref{eq:outerR1_1}) and (\ref{eq:outerR1_2}) are nothing but cutset bounds. Hence, proving the non-cutset bound of (\ref{eq:outerR1_R2_1}) is the main focus of this proof. Also recall that this non-cutset bound is used to obtain the outer bound of~(\ref{eq:SymmetricUpperBound}) for the symmetric capacity in Theorem~\ref{theorem-symmetric}. We go through the proof of~(\ref{eq:outerR1_1}) and~(\ref{eq:outerR1_2}). We then focus on the proof of~(\ref{eq:outerR1_R2_1}), where we will also provide insights as to the proof idea.

\textbf{Proof of (\ref{eq:outerR1_1}):} Starting with Fano's inequality, we get:
\begin{align*}
\begin{split}
N( R_1 - \epsilon_N) \leq I(W_1;Y_1^{N}) \overset{(a)}{\leq} \sum  [h(Y_{1i}) - h(Z_{1i})],\\
\end{split}
\end{align*}
where $(a)$ follows from the fact that conditioning reduces entropy.
Assume that $X_1$ and $X_2$ have covariance $\rho$, i.e., $E[X_1X_2^*]=\rho$. Then, we get:
\begin{align}
\begin{split}
\label{eq:entropyofY1}
h(Y_1) 
\leq \log 2 \pi e \left( 1 + \mathsf{SNR}_1 + \mathsf{INR}_{21} + 2 |\rho| \sqrt{\mathsf{SNR}_1 \cdot \mathsf{INR}_{21} } \right).
\end{split}
\end{align}
If $(R_1,R_2)$ is achievable, then $\epsilon_N \rightarrow 0$ as $N \rightarrow \infty$. Therefore, we get the desired bound:
\begin{align}
R_1 \leq h(Y_1) - h(Z_1) \leq \log \left( 1+ \mathsf{SNR}_1  + \mathsf{INR}_{21} + 2 |\rho| \sqrt{ \mathsf{SNR}_1  \cdot \mathsf{INR}_{21} }
 \right).
\end{align}

\textbf{Proof of (\ref{eq:outerR1_2}):} Starting with Fano's inequality, we get:
\begin{align*}
\begin{split}
N&(R_1 - \epsilon_N) \leq I(W_1;Y_1^N, Y_2^N, W_2) \\
&\overset{(a)}= \sum [h(Y_{1i},Y_{2i}|W_2,Y_1^{i-1},Y_2^{i-1}) - h(Z_{1i}) - h (Z_{2i}) ]  \\
&\overset{(b)}{=} \sum [h(Y_{1i},Y_{2i}|W_2, Y_1^{i-1},Y_2^{i-1},X_{2}^{i}) - h(Z_{1i}) - h (Z_{2i}) ] \\
&\overset{(c)}{=} \sum [h(Y_{2i}|W_2,Y_1^{i-1},Y_2^{i-1}, X_{2}^{i}) - h (Z_{2i})]  + \sum [h(Y_{1i}|W_2, Y_1^{i-1},Y_2^{i-1}, X_{2}^{i},Y_{2i},S_{1}^{i}) - h(Z_{1i}) ]  \\
&\overset{(d)}{\leq} \sum \left[ h(Y_{2i}|X_{2i}) - h (Z_{2i}) + h(Y_{1i}|X_{2i},S_{1i}) - h (Z_{1i}) \right]
\end{split}
\end{align*}
where ($a$) follows from the fact that $W_1$ is independent from $W_2$ and $h(Y_1^{N},Y_2^{N}|W_1,W_2)= h(Y_1^N,S_1^N|W_1,W_2) = \sum [h(Z_{1i}) + h(Z_{2i})]$ (see Claim \ref{claim-5}); $(b)$ follows from the fact that $X_{2}^{i}$ is a function of $(W_2,Y_2^{i-1})$; ($c$) follows from the fact that $S_{1}^{i}$ is a function of $(Y_{2}^{i},X_{2}^{i})$; ($d$) follows from the fact that conditioning reduces entropy. Hence, we get the desired result:
\begin{align*}
\begin{split}
R_1 &\leq  h(Y_{2}|X_{2}) - h (Z_{2}) + h(Y_{1}|X_{2},S_{1}) - h (Z_{1}) \\
&\overset{(a)}{\leq} \log \left( 1 +   (1- |\rho|^2) \mathsf{INR}_{12} \right) +   \log \left( 1 +  \frac{ (1- |\rho|^2) \mathsf{SNR}_1}{ 1 + (1- |\rho|^2) \mathsf{INR}_{12}} \right)
\end{split}
\end{align*}
where $(a)$ follows from the fact that
\begin{align}
h(Y_{2}|X_{2}) &\leq \log 2 \pi e \left( 1 +   (1- |\rho|^2) \mathsf{INR}_{12} \right), \\
\label{eq:h_Y1_X2S1}
h(Y_1|X_2,S_1) & \leq \log 2 \pi e \left( 1+  \frac{(1-|\rho|^2)\mathsf{SNR}_1}{1 + (1-|\rho|^2) \mathsf{INR}_{12} } \right).
\end{align}
The inequality of~(\ref{eq:h_Y1_X2S1}) is obtained as follows. Given $(X_2,S_1)$, the variance of $Y_1$ is upper-bounded by
\begin{align*}
\begin{split}
\textrm{Var} \left[ Y_1| X_2, S_1 \right] &\leq K_{Y_1} - K_{Y_1 (X_2, S_1)}K_{(X_2,S_1)}^{-1}K_{Y_1 (X_2, S_1)}^{*},
\end{split}
\end{align*}
where
\begin{align}
\begin{split}
K_{Y_1} &= E\left[ |Y_1|^2 \right] = 1  + \mathsf{SNR}_1 + \mathsf{INR}_{21} +  \rho g_{11}^* g_{21} + \rho^* g_{11}g_{21}^*, \\
K_{Y_1 (X_2,S_1)} &= E \left[ Y_1 [X_2^*, S_1^*]  \right] = \left[\rho g_{11} + g_{21}, g_{12}^* g_{11} + \rho^* g_{21} g_{12}^* \right], \\
K_{(X_2,S_1)} & = E \left[ \left[
                             \begin{array}{cc}
                               |X_2|^2 & X_2 S_1^* \\
                               X_2^* S_1 & |S_1|^2 \\
                             \end{array}
                           \right]
 \right] = \left[
             \begin{array}{cc}
               1 & \rho^* g_{12}^* \\
               \rho g_{12} & 1+ \mathsf{INR}_{12} \\
             \end{array}
           \right].
\end{split}
\end{align}
By further calculation, we can get~(\ref{eq:h_Y1_X2S1}).

\textbf{Proof of (\ref{eq:outerR1_R2_1}):} The proof idea is based on the genie-aided argument~\cite{ElGamal:it82}. However, finding an appropriate genie is not simple since there are many possible combinations of the random variables. The deterministic IC example in Fig.~\ref{fig:Symmetric_Weak} $(b)$ gives insights into this. Note that providing $a_1$ and $(b_1,b_2,b_3)$ to receiver 1 does not increase the rate $R_1$, i.e., these are \emph{useless gifts}. This may motivate us to choose a genie as $(g_{12} X_1, W_2)$. However, in the Gaussian channel, providing $g_{12} X_1$ is equivalent to providing $X_1$. This is of course too much information, inducing a loose upper bound. Inspired by the technique in~\cite{dtse:it07}, we instead consider a noisy version of $g_{12} X_1$:  \begin{align}
S_1 =  g_{12} X_1 + Z_2.
\end{align}
Intuition behind this is that we cut off $g_{12} X_1$ at the noise level. Indeed this matches intuition in the deterministic IC. This genie together with $W_2$ turns out to lead to the desired tight upper bound.

On top of the genie-aided argument, we need more techniques. In the feedback problem, the functional relationship between the random variables is more complicated and thus needs be well explored. We identify several functional relationships through Claims~\ref{claim-5}, \ref{claim-4} and~\ref{claim-3_Gaussian}, which turn out to play a significant role in proving the non-cutset bound of~(\ref{eq:outerR1_R2_1}).

Starting with Fano's inequality, we get:
\begin{align*}
\begin{split}
N&(R_1 + R_2 - \epsilon_N) \leq I(W_1;Y_1^{N}) + I(W_2;Y_2^{N})  \\
&\overset{(a)}{\leq} I(W_1;Y_1^{N},S_1^{N}, W_2) + I(W_2;Y_2^{N}) \\
&\overset{(b)}= h(Y_1^{N},S_1^{N}|W_2)  - h(Y_1^{N},S_1^{N}|W_1,W_2) + I(W_2;Y_2^{N}) \\
&\overset{(c)}{=} h(Y_1^{N},S_1^{N}|W_2) - \sum \left[ h(Z_{1i}) + h(Z_{2i}) \right]  + I(W_2;Y_2^{N}) \\
&\overset{(d)}{=} h(Y_1^{N}|S_1^{N},W_2) - \sum  h(Z_{1i}) + h(Y_2^{N})  - \sum h(Z_{2i})  \\
&\overset{(e)}{=} h(Y_1^{N}|S_1^{N},W_2,X_2^{N}) - \sum h(Z_{1i}) + h(Y_2^{N}) - \sum h(Z_{2i})  \\
&\overset{(f)}{\leq} \sum_{i=1}^{N} \left[ h(Y_{1i}|S_{1i},X_{2i}) - h(Z_{1i}) + h(Y_{2i}) - h(Z_{2i}) \right]
\end{split}
\end{align*}
where ($a$) follows from the fact that adding information increases mutual information (providing a \emph{genie}); ($b$) follows from the independence of $W_1$ and $W_2$; ($c$) follows from $h(Y_1^{N},S_1^{N}|W_1,W_2) =\sum \left[ h(Z_{1i}) + h(Z_{2i}) \right]$ (see Claim \ref{claim-5}); $(d)$ follows from  $h(S_1^N|W_2)=h(Y_2^N|W_2)$ (see Claim \ref{claim-4}); ($e$) follows from the fact that $X_2^{N}$ is a function of $(W_2,S_1^{N-1})$ (see Claim \ref{claim-3_Gaussian}); ($f$) follows from the fact that conditioning reduces entropy.

Hence, we get
\begin{align*}
R_1 + R_2 &\leq  h(Y_{1}|S_{1},X_{2}) - h(Z_1) + h(Y_{2}) - h(Z_2).
\end{align*}
Note that
\begin{align}
\begin{split}
\label{eq:entropyofY2}
h(Y_2) 
\leq \log 2 \pi e \left( 1 + \mathsf{SNR}_2 + \mathsf{INR}_{12} + 2 |\rho| \sqrt{\mathsf{SNR}_2 \cdot \mathsf{INR}_{12} } \right).
\end{split}
\end{align}
From (\ref{eq:h_Y1_X2S1}) and (\ref{eq:entropyofY2}),  we get the desired upper bound.

\begin{claim}
\label{claim-5}
$h(Y_1^{N},S_1^{N}|W_1,W_2) = \sum \left[ h(Z_{1i}) + h(Z_{2i}) \right].$
\end{claim}
\begin{proof}
\begin{align*}
\begin{split}
h&(Y_1^{N},S_1^{N}|W_1,W_2) = \sum h(Y_{1i},S_{1i}|W_1,W_2,Y_1^{i-1},S_1^{i-1}) \\
&\overset{(a)}{=} \sum h(Y_{1i},S_{1i}|W_1,W_2,Y_1^{i-1},S_1^{i-1},X_{1i},X_{2i}) \\
&\overset{(b)}{=}  \sum h(Z_{1i},Z_{2i}|W_1,W_2,Y_1^{i-1},S_1^{i-1},X_{1i},X_{2i}) \\
&\overset{(c)}{=}  \sum \left[ h(Z_{1i}) + h(Z_{2i}) \right],
\end{split}
\end{align*}
where ($a$) follows from the fact that $X_{1i}$ is a function of $(W_1, Y_1^{i-1})$ and $X_{2i}$ is a function of $(W_2,S_{1}^{i-1})$ (by Claim \ref{claim-3_Gaussian});
($b$) follows from the fact that $Y_{1i} = g_{11} X_{1i} + g_{21} X_{2i} + Z_{1i}$ and $S_{1i} = g_{12} X_{1i} + Z_{2i}$; ($c$) follows from the memoryless property of the channel and the independence assumption of $Z_{1i}$ and $Z_{2i}$.
\end{proof}

\begin{claim}
\label{claim-4}
$h(S_1^{N}|W_2) = h(Y_2^{N}|W_2).$
\end{claim}
\begin{proof}
\begin{align*}
\begin{split}
h&(Y_2^{N}|W_2) = \sum h(Y_{2i}|Y_2^{i-1},W_2) \\
&\overset{(a)}{=} \sum h(S_{1i}|Y_2^{i-1},W_2) \\
&\overset{(b)}{=}  \sum h(S_{1i}|Y_2^{i-1},W_2,X_2^{i},S_1^{i-1}) \\
&\overset{(c)}{=}  \sum h(S_{1i}|W_2,S_1^{i-1}) =h(S_1^{N}|W_2),
\end{split}
\end{align*}
where ($a$) follows from the fact that $Y_{2i}$ is a function of $(X_{2i},S_{1i})$ and $X_{2i}$ is a function of $(W_2, Y_2^{i-1})$;
($b$) follows from the fact that $X_2^{i}$ is a function of $(W_2,Y_2^{i-1})$ and $S_1^{i-1}$ is a function of $(Y_2^{i-1},X_2^{i-1})$; ($c$) follows from the fact that $Y_{2}^{i-1}$ is a function of $(X_2^{i-1},S_1^{i-1})$ and $X_2^{i}$ is a function of  $(W_2,S_1^{i-1})$ (by Claim \ref{claim-3_Gaussian}).
\end{proof}

\begin{claim}
\label{claim-3_Gaussian}
For all $i\geq 1$, $X_1^{i}$ is a function of $(W_1,S_2^{i-1})$ and $X_2^{i}$ is a function of $(W_2,S_1^{i-1})$.
\end{claim}
\begin{proof}
By symmetry, it is enough to prove only one. Notice that $X_2^{i}$ is a function of $(W_2, Y_2^{i-1})$ and $Y_{2}^{i-1}$ is a function of $(X_2^{i-1},S_1^{i-1}$). Hence, $X_2^{i}$ is a function of $(W_2, X_2^{i-1}, S_1^{i-1})$. Iterating the same argument, we conclude that $X_{2}^{i}$ is a function of $(W_2, X_{21}, S_1^{i-1})$. Since $X_{21}$ depends only on $W_2$, we complete the proof.
\end{proof}
\end{proof}

\subsection{2-Bit Gap to the Capacity Region}

\begin{theorem}
\label{theorem-2bitgap}
The gap between the inner and upper bound regions (given in Theorems~\ref{theorem:achievableregion} and~\ref{theorem:outerbound}) is at most $2$ bits/s/Hz/user:
\begin{align}
\mathcal{R} \subseteq \mathcal{C} \subseteq \overline{\mathcal{C}} \subseteq  \mathcal{R} \oplus \left( [0,2] \times [0,2] \right).
\end{align}
\end{theorem}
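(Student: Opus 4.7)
The plan is to prove $\overline{\mathcal{C}}\subseteq\mathcal{R}\oplus([0,2]\times[0,2])$ by a direct, term-by-term comparison of the two regions characterized in Theorems~\ref{theorem:achievableregion} and~\ref{theorem:outerbound}. Given any $(R_1,R_2)\in\overline{\mathcal{C}}$ realized by a parameter $\rho\in[0,1]$ in the outer bound, the goal is to exhibit $\rho'\in[0,1]$ under which $(\max\{R_1-2,0\},\max\{R_2-2,0\})$ satisfies every inequality of the inner bound with parameter $\rho'$. A close look at the two theorems shows that the outer bound carries the factor $(1-\rho^2)$ on all private-signal-type expressions whereas the inner bound carries the factor $(1-\rho')$ in the corresponding places, pointing directly at the substitution $\rho':=\rho^2$.

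With this substitution all six comparisons reduce to two elementary inequalities. The first aligns the private terms,
\begin{align*}
\log\!\left(1+\frac{(1-\rho^2)\mathsf{SNR}}{1+(1-\rho^2)\mathsf{INR}}\right)\leq\log\!\left(2+\frac{\mathsf{SNR}}{\mathsf{INR}}\right),
\end{align*}
which follows from the monotone bound $\frac{(1-\rho^2)\mathsf{SNR}}{1+(1-\rho^2)\mathsf{INR}}\leq\frac{\mathsf{SNR}}{\mathsf{INR}}$. The second controls the beamforming cross terms,
\begin{align*}
\log\!\left(1+\mathsf{SNR}+\mathsf{INR}+2\rho\sqrt{\mathsf{SNR}\cdot\mathsf{INR}}\right)\leq 1+\log\!\left(1+\mathsf{SNR}+\mathsf{INR}+2\rho^2\sqrt{\mathsf{SNR}\cdot\mathsf{INR}}\right),
\end{align*}
which is an immediate consequence of the AM--GM bound $2\rho\sqrt{\mathsf{SNR}\cdot\mathsf{INR}}\leq\mathsf{SNR}+\mathsf{INR}$ on the numerator together with discarding the nonnegative $2\rho^2\sqrt{\mathsf{SNR}\cdot\mathsf{INR}}$ in the denominator.

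Applying these to each pair: for (\ref{eq:outerR1_1}) vs.\ (\ref{eq:achieve_R1_1}), the cross-term inequality costs $1$ bit and the $-1$ constant in the inner bound costs another, for a total gap at most $2$; for (\ref{eq:outerR1_2}) vs.\ (\ref{eq:achieve_R1_2}), the $\log(1+(1-\rho^2)\mathsf{INR}_{12})$ term matches exactly under $\rho'=\rho^2$, the private-term inequality contributes $\leq 0$, and the $-2$ constant contributes $2$, for a total gap at most $2$; for the sum-rate pair (\ref{eq:outerR1_R2_1}) vs.\ (\ref{eq:achieve_R12_1}) both inequalities are invoked, yielding total gap at most $1+2=3$ bits, which is within the $4$-bit sum-rate allowance implied by $([0,2]\times[0,2])$. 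The remaining three pairs are handled identically after swapping user indices.

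The main obstacle is identifying the substitution $\rho'=\rho^2$: the naive choice $\rho'=\rho$ already fails, since the $(1-\rho)$ versus $(1-\rho^2)$ mismatch leaves an extra $\log(1+\rho)\leq 1$ bit on the private $R_1$ bound and pushes that gap to $3$. Under the quadratic substitution the private-signal terms align exactly and the whole analysis reduces to the two elementary bounds above. The only other subtlety is bookkeeping: both $\mathcal{R}$ and $\overline{\mathcal{C}}$ are unions over their respective correlation parameters, so one must formulate the argument as \emph{for every} $\rho$ realizing membership in $\overline{\mathcal{C}}$, the choice $\rho'=\rho^2$ realizes membership of the shifted point in $\mathcal{R}$, which is exactly what the Minkowski-sum statement requires.
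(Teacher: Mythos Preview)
Your proposal is correct and follows the same term-by-term comparison strategy as the paper: pair each outer-bound inequality with its structurally matching inner-bound inequality and bound the difference. The paper's proof is terser; it writes the gap $\delta_1$ as $\max\{1,\ \log(1+\tfrac{\mathsf{SNR}_1}{1+\mathsf{INR}_{12}})-\log(2+\tfrac{\mathsf{SNR}_1}{\mathsf{INR}_{12}})+2\}$ and similarly for $\delta_2,\delta_{12}$, which is a $\rho$-free expression --- effectively the comparison is against the uncorrelated inner bound $\rho'=0$ (so that $\log(1+(1-\rho')\mathsf{INR}_{12})=\log(1+\mathsf{INR}_{12})$ absorbs the outer $\log(1+(1-\rho^2)\mathsf{INR}_{12})$ term uniformly in $\rho$).

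Your substitution $\rho'=\rho^2$ is a neat alternative: it makes the $\log(1+(1-\rho^2)\mathsf{INR})$ terms match \emph{exactly}, so the private-type comparison costs nothing and all the slack sits in the cross term, handled by your AM--GM bound. The paper's implicit $\rho'=0$ choice trades this: the private-type terms no longer match exactly but are still dominated, while the cross-term comparison becomes $\log\frac{1+\mathsf{SNR}+\mathsf{INR}+2\rho\sqrt{\mathsf{SNR}\cdot\mathsf{INR}}}{1+\mathsf{SNR}+\mathsf{INR}}\leq 1$. Both routes land at the same $2$-bit bound on the individual rates; your sum-rate gap of $3$ is looser than the paper's $\delta_{12}\leq 2$ but, as you note, still comfortably within the $4$-bit allowance. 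Your observation that the naive choice $\rho'=\rho$ does not close the $(1-\rho)$ vs.\ $(1-\rho^2)$ mismatch on bound~(\ref{eq:achieve_R1_2}) is accurate and is precisely why some care with the inner parameter is needed.
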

\begin{proof}
The proof is immediate by Theorem~\ref{theorem:achievableregion} and~$\ref{theorem:outerbound}$. We define $\delta_1$ to be the difference between $\min \left\{ (\ref{eq:outerR1_1}), (\ref{eq:outerR1_2}) \right\}$ and $\min \left\{ (\ref{eq:achieve_R1_1}), (\ref{eq:achieve_R1_2}) \right\}$. Similarly, we define $\delta_2$ and $\delta_{12}$. Straightforward computation gives
\begin{align*}
\begin{split}
\delta_1 &\leq \max \left \{  1, \log \left( 1 +   \frac{\mathsf{SNR}_1}{ 1 + \mathsf{INR}_{12} } \right) - \log \left( 2 +  \frac{ \mathsf{SNR}_1}{ \mathsf{INR}_{12} } \right) + 2 \right \} \leq 2.
\end{split}
\end{align*}
Similarly, we get $\delta_2 \leq 2$ and $\delta_{12} \leq 2$. This completes the proof.
\end{proof}

\begin{remark}[\textbf{Why does a 2-bit gap occur?}]
\label{remark:Why2Bits}
The achievable scheme meant for the capacity region involves message-splitting. As mentioned in Remark~\ref{remark:AFbetter}, message-splitting incurs some loss in the process of decoding the common message while treating private signals as noise.
Accounting for the effect of private signals, the effective noise power becomes double, thus incurring a 1-bit gap. The other 1-bit gap comes from a relay structure of the feedback IC. To see this, consider an extreme case where user 2's rate is completely ignored. In this case, we can view the $[transmitter 2, receiver 2]$ communication pair as a single relay which only helps the $[transmitter 1, receiver 1]$ communication pair. It has been shown in~\cite{Salman:allterton07} that for this single relay Gaussian channel, the worst-case gap between the best known inner bound~\cite{Cover:it79} and the outer bound is 1 bit/s/Hz. This incurs the other 1-bit gap. This 2-bit gap is based on the outer bound region in Theorem~\ref{theorem:outerbound}, which allows for arbitrary correlation between the transmitters. So, one can expect that an actual gap to the capacity region is less than 2 bits.
\end{remark}

\begin{remark}[\textbf{Reducing the gap}]
As discussed, the amplify-and-forward scheme has the potential to reduce the gap. However, due to the inherent relay structure, reducing the gap into a less-than-one bit is challenging. As long as no significant progress is made on the single relay Gaussian channel, one cannot easily reduce the gap further.
\end{remark}

\begin{remark}[\textbf{Comparison with the two-staged scheme}]
\label{remark:InfiniteVSTwo}
Specializing to the symmetric rate, it can be shown that the \emph{infinite}-staged scheme in Theorem~\ref{theorem:achievableregion} can achieve the symmetric capacity to within 1 bit. Coincidentally, this gap matches the gap result of the \emph{two}-staged scheme in Theorem~\ref{theorem-symmetric}. However, the 1-bit gap comes from different reasons. In the infinite-staged scheme, the 1-bit gap comes from message-splitting. In contrast, in the two-staged scheme, the gap is due to lack of beamforming gain.
One needs to come up with a new technique that well combines these two schemes to reduce the gap into a less-than-one bit.
\end{remark}

\section{The Feedback Capacity Region of El Gamal-Costa Model}
\label{sec:ElGamal-Costa}
We have so far made use of the linear deterministic IC to provide insights into approximating the feedback capacity region of the Gaussian IC. The deterministic IC is a special case of El Gamal-Costa deterministic IC~\cite{ElGamal:it82}. In this section, we establish the exact feedback capacity region for this general class of deterministic ICs.




\begin{figure}[!htp]
\begin{center}
{\epsfig{figure=./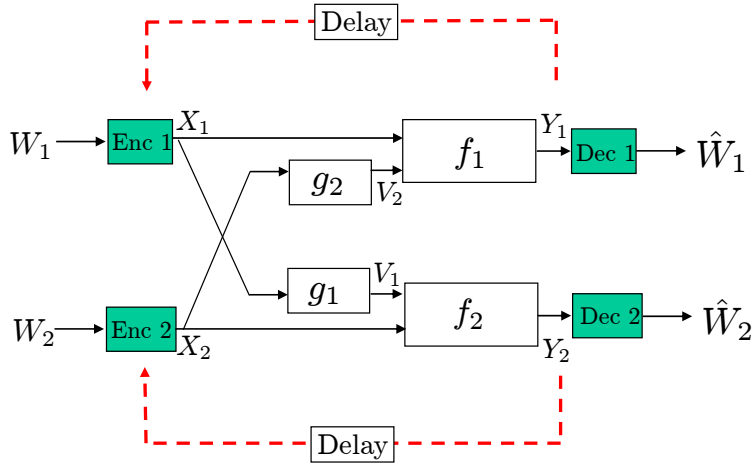, angle=0, width=0.6\textwidth}}
\end{center}
\caption{El Gamal-Costa deterministic IC with feedback} \label{fig_ElGamalCosta}
\end{figure}

Fig.~\ref{fig_ElGamalCosta} $(a)$ illustrates El Gamal-Costa deterministic IC with feedback. The key condition of this model is given by
\begin{align}
\begin{split}
\label{eq-ElGamalCostaCondition}
H(V_2|Y_1,X_1) = 0,\\
H(V_1|Y_2,X_2) = 0,\\
\end{split}
\end{align}
where $V_k$ is a part of $X_k$ ($k=1,2$), visible to the other receiver.
This implies that in any working system where $X_1$ and $X_2$ are decodable at receivers 1 and 2 respectively, $V_1$ and $V_2$ are completely determined at receivers 2 and 1, respectively, i.e., these are common signals.



\begin{theorem}
\label{theorem-ElGamalCosta}
The feedback capacity region of El Gamal-Costa deterministic IC is the set of $(R_1,R_2)$ such that
\begin{align*}
    R_1 &\leq \min \left\{ H(Y_1), H(Y_2|X_2,U) + H(Y_1|V_1,V_2,U) \right\} \\
    R_2 &\leq \min \left\{ H(Y_2), H(Y_1|X_1,U) + H(Y_2|V_1,V_2,U) \right\} \\
    R_1 + R_2 &\leq \min \left\{ H(Y_1|V_1,V_2,U) + H(Y_2), H(Y_2|V_2,V_1,U) + H(Y_1) \right\}
\end{align*}
for some joint distribution $p(u,x_1,x_2)=p(u)p(x_1|u)p(x_2|u)$.
Here $U$ is a discrete random variable which takes on values in the set $\mathcal{U}$ where $|\mathcal{U}| \leq \min (|\mathcal{V}_1||\mathcal{V}_2|, |\mathcal{Y}_1|, |\mathcal{Y}_2|) + 3$.
\end{theorem}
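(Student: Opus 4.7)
The plan is to prove Theorem~\ref{theorem-ElGamalCosta} via matching achievability and converse, in both cases lifting the arguments we have already developed for the Gaussian channel.

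For achievability, I would invoke Lemma~\ref{lemma:feedbackachievable} and specialize to El Gamal-Costa's model by the identification $U_1 \leftarrow V_1$ and $U_2 \leftarrow V_2$. Because $V_k$ is a deterministic function of $X_k$ and the conditions~(\ref{eq-ElGamalCostaCondition}) force $Y_1$ to be a function of $(X_1, V_2)$ and $Y_2$ of $(X_2, V_1)$, every conditional entropy of the form $H(Y_k \mid X_k, V_1, V_2, \cdot)$ vanishes. This collapses each of the six mutual informations in Lemma~\ref{lemma:feedbackachievable} to a pure entropy: for example $I(U, V_2, X_1; Y_1) = H(Y_1)$, $I(V_1; Y_2 \mid U, X_2) = H(Y_2 \mid X_2, U)$, and $I(X_1; Y_1 \mid V_1, V_2, U) = H(Y_1 \mid V_1, V_2, U)$. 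Reading the six resulting inequalities off line by line reproduces the statement, and the joint distribution $p(u)p(v_1|u)p(v_2|u)p(x_1|v_1,u)p(x_2|v_2,u)$ reduces to $p(u)p(x_1|u)p(x_2|u)$ after marginalizing the deterministic $V_k = \phi_k(X_k)$.

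For the converse, my plan is to mirror the Gaussian converse of Theorem~\ref{theorem:outerbound}, replacing the noisy genie $S_1 = g_{12}X_1 + Z_2$ by its noise-free deterministic counterpart $V_1$. The two cutset bounds $R_k \leq H(Y_k)$ follow directly from Fano. For the individual bound on $R_1$ I would start from
\[
N(R_1 - \epsilon_N) \leq I(W_1; Y_1^N, V_1^N, W_2) = H(V_1^N \mid W_2) + H(Y_1^N \mid V_1^N, W_2),
\]
using $W_1 \perp W_2$ and the deterministic identity $H(Y_1^N, V_1^N \mid W_1, W_2) = 0$. The three structural claims from the Gaussian proof then carry over verbatim: (i) $H(V_1^N \mid W_2) = H(Y_2^N \mid W_2)$ because $V_1^N$ and $Y_2^N$ determine each other given $W_2$; (ii) $X_2^i$ (and hence $V_2^i$) is a function of $(W_2, V_1^{i-1})$, letting us pull these quantities into the conditioning for free; and (iii) the chain rule together with ``conditioning reduces entropy''. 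The sum-rate bound is obtained by combining the same genie with a vanilla Fano bound at receiver 2: the $H(Y_2^N \mid W_2)$ contributions telescope and leave $H(Y_1^N \mid V_1^N, W_2) + H(Y_2^N)$, which should single-letterize to $H(Y_1 \mid V_1, V_2, U) + H(Y_2)$.

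The main obstacle I foresee is the final single-letterization step -- specifically, introducing the auxiliary $U$ so that the induced law factors as $p(u)p(x_1|u)p(x_2|u)$. Pure time-sharing $U = Q$ suffices in the non-feedback El Gamal-Costa proof because $X_1 \perp X_2$ marginally, but feedback correlates $X_{1i}$ with $X_{2i}$ through the common past, so $U$ must absorb enough shared history to restore conditional independence. My plan is to take $U_i$ to be the time index $Q$ augmented by a suitable feedback-generated common summary such as $(Y_1^{i-1}, Y_2^{i-1})$, verify the Markov chain $X_1 - U - X_2$ by exploiting the deterministic channel structure (so that the induced marginals are honest product conditionals), and then apply the Fenchel-Eggleston support-lemma argument to the three active types of inequalities to arrive at the cardinality bound $|\mathcal{U}| \leq \min(|\mathcal{V}_1||\mathcal{V}_2|, |\mathcal{Y}_1|, |\mathcal{Y}_2|) + 3$.
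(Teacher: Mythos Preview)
Your achievability plan and the skeleton of the converse are essentially the same as the paper's: invoke Lemma~\ref{lemma:feedbackachievable} with $U_k=V_k$ for the direct part, and for the converse replace the Gaussian genie $S_1$ by the deterministic $V_1$, use the analogs of Claims~\ref{claim-5}--\ref{claim-3_Gaussian}, and single-letterize. That part is fine.

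The genuine gap is precisely the step you flagged as the obstacle, and your tentative fix does not work. Taking $U_i=(Y_1^{i-1},Y_2^{i-1})$ does \emph{not} make $X_{1i}$ and $X_{2i}$ conditionally independent. A one–line counterexample inside the El~Gamal--Costa class: binary alphabets, $V_k=X_k$, $Y_1=Y_2=X_1\oplus X_2$; at time~1 send $X_{k1}=W_k$. Then $U_2=(Y_{11},Y_{21})$ reveals only $W_1\oplus W_2$, and conditioned on this the messages are perfectly correlated, so $I(X_{12};X_{22}\mid U_2)$ can be as large as one bit. Thus the induced single-letter law is not of the product form $p(u)p(x_1|u)p(x_2|u)$ and the converse collapses.

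The paper's choice is $U_i=(V_1^{i-1},V_2^{i-1})$, i.e.\ the history of the \emph{interference-carrying} signals rather than the full outputs, and the conditional independence is established by a dependence-balance argument in the style of Hekstra--Willems: one shows $I(W_1;W_2\mid V_1^{i-1},V_2^{i-1})\le 0$ by expanding the mutual information and exploiting the crosswise functional relations $V_1^{j}=\phi_1(W_1,V_2^{j-1})$ and $V_2^{j}=\phi_2(W_2,V_1^{j-1})$ that are specific to this model. In the counterexample above this $U_i$ equals $(W_1,W_2)$ and the independence is immediate, but in general the telescoping dependence-balance identity is what closes the argument. Once you have that Markov chain, the cardinality bound follows from the standard support-lemma reduction applied to the three entropy functionals and the two marginals of $(X_1,X_2)$, which is where the $|\mathcal{V}_1||\mathcal{V}_2|$ alternative in the bound comes from.
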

\begin{proof}
Achievability proof is straightforward by Lemma \ref{lemma:feedbackachievable}.
Let $U_k = V_k, \forall k$. Fix a joint distribution $p(u)p(u_1|u)p(u_2|u) p(x_1|u_1,u) p(x_2|u_2,u)$. We now write a joint distribution $p(u,x_1,x_2,u_1,u_2)$ in two different ways:
\begin{align}
\begin{split}
&p(u,x_1,x_2,u_1,u_2) \\
&= p(u)p(x_1|u)p(x_2|u) \delta (u_1 - g_1(x_1)) \delta (u_2-g_2(x_2) \\
&= p(u)p(u_1|u)p(u_2|u) p(x_1|u_1,u) p(x_2|u_2,u)
\end{split}
\end{align}
where $\delta(\cdot)$ indicates the Kronecker delta function. This gives
\begin{align}
\begin{split}
p(x_1|u):= \frac{ p(x_1|u_1,u) p(u_1|u) }{\delta(u_1 - g_1(x_1)) }\\
p(x_2|u):= \frac{ p(x_2|u_2,u) p(u_2|u) }{\delta(u_2 - g_2(x_2)) }
\end{split}
\end{align}
Now we can generate a joint distribution $p(u)p(x_1|u)p(x_2|u)$. Hence, we complete the achievability proof.
See Appendix \ref{Appendix:ElGamalCostaConverse} for converse proof.
\end{proof}

As a by-product, we obtain the feedback capacity region of the linear deterministic IC.
\begin{corollary}
\label{corollary-linear-asymmetric}
The feedback capacity region of the linear deterministic IC is the set of $(R_1,R_2)$ such that
\begin{align*}
    R_1 &\leq \min \left\{ \max(n_{11},n_{12}), \max(n_{11},n_{21})  \right\} \\
    R_2 &\leq \min \left\{ \max(n_{22},n_{21}), \max(n_{22},n_{12})  \right\} \\
    R_1 + R_2 &\leq \min \left\{ \max(n_{22},n_{12}) + (n_{11} -n_{12})^+ , \max(n_{11},n_{21}) + (n_{22} -n_{21})^+  \right\}.
\end{align*}
\end{corollary}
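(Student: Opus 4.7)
The plan is to derive the corollary by specializing Theorem~\ref{theorem-ElGamalCosta} to the linear deterministic IC, which is a concrete instance of El Gamal-Costa's model where $X_k$ is a binary vector, $Y_j = G_{1j} X_1 \oplus G_{2j} X_2$, and the natural choice of the common-information random variables is
\[
V_1 = G_{12} X_1, \qquad V_2 = G_{21} X_2.
\]
These clearly satisfy the El Gamal-Costa conditions (\ref{eq-ElGamalCostaCondition}) since $G_{21}X_2 = Y_1 \oplus G_{11} X_1$ is determined by $(Y_1,X_1)$, and symmetrically for $V_1$.

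For the inner bound, I would evaluate the region of Theorem~\ref{theorem-ElGamalCosta} at the trivial choice $U = \emptyset$ with $X_1, X_2$ independent and uniform over $\{0,1\}^q$ where $q = \max_{i,j} n_{ij}$. With this choice, each receiver output uses all available bit levels, giving $H(Y_1) = \max(n_{11}, n_{21})$ and $H(Y_2) = \max(n_{22}, n_{12})$. Conditioning on $X_2$ strips the direct signal from $Y_2$ and leaves $G_{12}X_1$, yielding $H(Y_2\mid X_2,U) = n_{12}$; by symmetry $H(Y_1\mid X_1,U) = n_{21}$. The only genuinely nontrivial calculation is
\[
H(Y_1 \mid V_1, V_2, U) \;=\; H\bigl(G_{11}X_1 \,\bigm|\, G_{12}X_1\bigr) \;=\; (n_{11} - n_{12})^+,
\]
which uses that the level-aligned projections $G_{11}X_1$ and $G_{12}X_1$ share their top $\min(n_{11},n_{12})$ bits of $X_1$, so the residual uncertainty is $(n_{11}-n_{12})^+$; symmetrically $H(Y_2\mid V_1,V_2,U) = (n_{22}-n_{21})^+$. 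Plugging these into the three inequalities of Theorem~\ref{theorem-ElGamalCosta} and simplifying $n_{12} + (n_{11}-n_{12})^+ = \max(n_{11},n_{12})$ (and symmetric identities) reproduces the three bounds in the corollary statement.

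For the outer bound, I would argue that for \emph{any} admissible joint distribution $p(u)p(x_1|u)p(x_2|u)$, every entropy in Theorem~\ref{theorem-ElGamalCosta} is upper-bounded by the same deterministic quantity used above, because each relevant random variable is a binary vector of known length: $H(Y_j) \le \max(n_{1j},n_{2j})$; $H(Y_2\mid X_2, U) \le H(G_{12}X_1) \le n_{12}$; and $H(Y_1\mid V_1,V_2,U) \le H(G_{11}X_1 \mid G_{12}X_1) \le (n_{11}-n_{12})^+$. Taking the minimum of the two terms in each bound of Theorem~\ref{theorem-ElGamalCosta} then yields the corollary's expressions as valid outer bounds.

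I do not expect a serious obstacle: the only subtle point is being careful with the bit-level alignment in the identity $H(G_{11}X_1 \mid G_{12}X_1) = (n_{11}-n_{12})^+$ and in the analogous bound for the converse, since one must track which bits of $X_1$ are visible at which receiver under the shift convention of the linear deterministic model. Once this alignment is spelled out, the bounds of Theorem~\ref{theorem-ElGamalCosta} collapse to the closed-form expressions in Corollary~\ref{corollary-linear-asymmetric}, and the uniform-input evaluation simultaneously saturates each of them, establishing the claimed region exactly.
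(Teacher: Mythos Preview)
Your proposal is correct and follows essentially the same approach as the paper: specialize Theorem~\ref{theorem-ElGamalCosta} by taking $U$ constant and $X_1,X_2$ independent uniform, then verify that this choice simultaneously saturates all the entropy terms. The paper's proof is a two-line pointer to exactly this computation, while you have additionally (and correctly) spelled out the converse piece---that for any $p(u)p(x_1|u)p(x_2|u)$ the entropies cannot exceed the uniform-input values---which the paper leaves implicit.
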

\begin{proof}
The proof is straightforward by Theorem~\ref{theorem-ElGamalCosta}. The capacity region is achieved when $U$ is constant; and $X_1$ and $X_2$ are independent and uniformly distributed.
\end{proof}

\section{Role of Feedback}
\label{sec:feedbackrole}

Recall in Fig.~\ref{fig:gdof} that feedback gain is \emph{bounded} for $0 \leq \alpha \leq \frac{2}{3}$ in terms of the symmetric rate. So a natural question arises: is feedback gain  marginal also from a perspective of the capacity region? With the help of Corollary~\ref{corollary-linear-asymmetric}, we show that feedback can provide \emph{multiplicative} gain even in this regime. We also answer another interesting question posed in Section~\ref{sec:FB_CapacityRegion}: why is the $2R_1+ R_2$ bound missing with feedback?

\subsection{Feedback Gain from a Capacity Region Perspective}
\label{sec:linearDIC}

\begin{figure}[!htp]
\begin{center}
{\epsfig{figure=./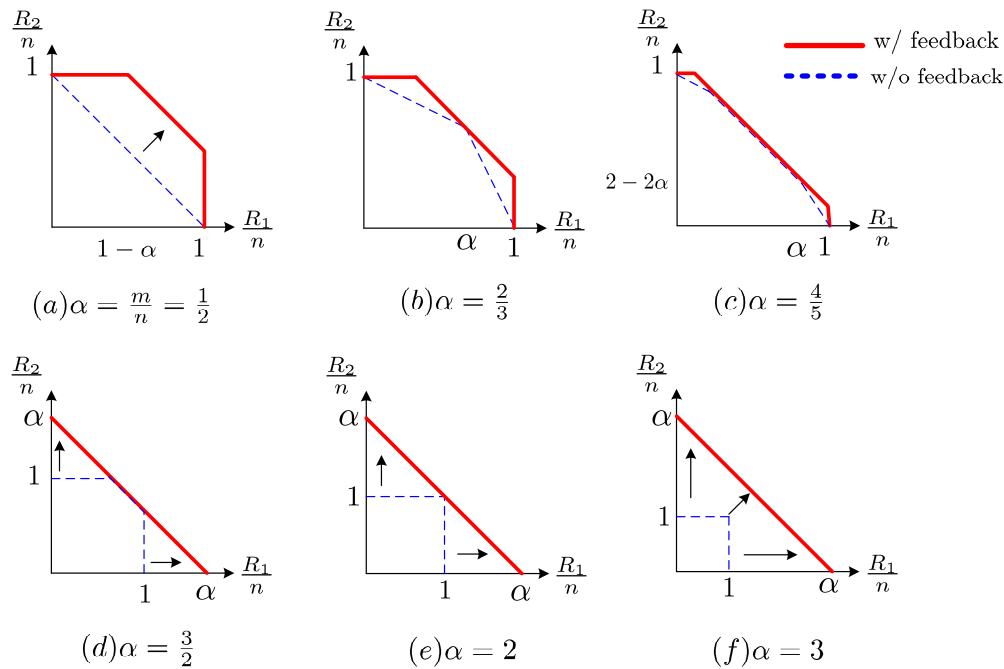, angle=0, width=0.8\textwidth}}
\end{center}
\caption{Feedback capacity region of the linear deterministic IC} \label{fig_LDIC_feedbackgain}
\end{figure}

Fig. \ref{fig_LDIC_feedbackgain} shows the feedback capacity region of the linear deterministic IC under the symmetric channel setting: $n = n_{11}=n_{22}$ and $m = n_{12}=n_{21}$. Interestingly, while for $\frac{2}{3} \leq \alpha \leq 2$, the symmetric capacity does not improve with feedback, the feedback capacity region is enlarged even for this regime.


\subsection{Why is $2R_1 + R_2$ Bound Missing with Feedback?}
\label{sec:2R1R2boundmissing}

Consider an example where $2R_1 + R_2$ bound is active in the non-feedback case. Fig.~\ref{fig_nofeedback_hole} $(a)$ shows an example where a corner point of $(3,0)$ can be achieved. Observe that at the two receivers, the five signal levels are consumed out of the six signal levels. There is one resource hole.
This resource hole is closely related to the $2R_1 + R_2$ bound, which will be shown in Fig.~\ref{fig_nofeedback_hole} $(b)$.

\begin{figure}[htp]
\begin{center}
{\epsfig{figure=./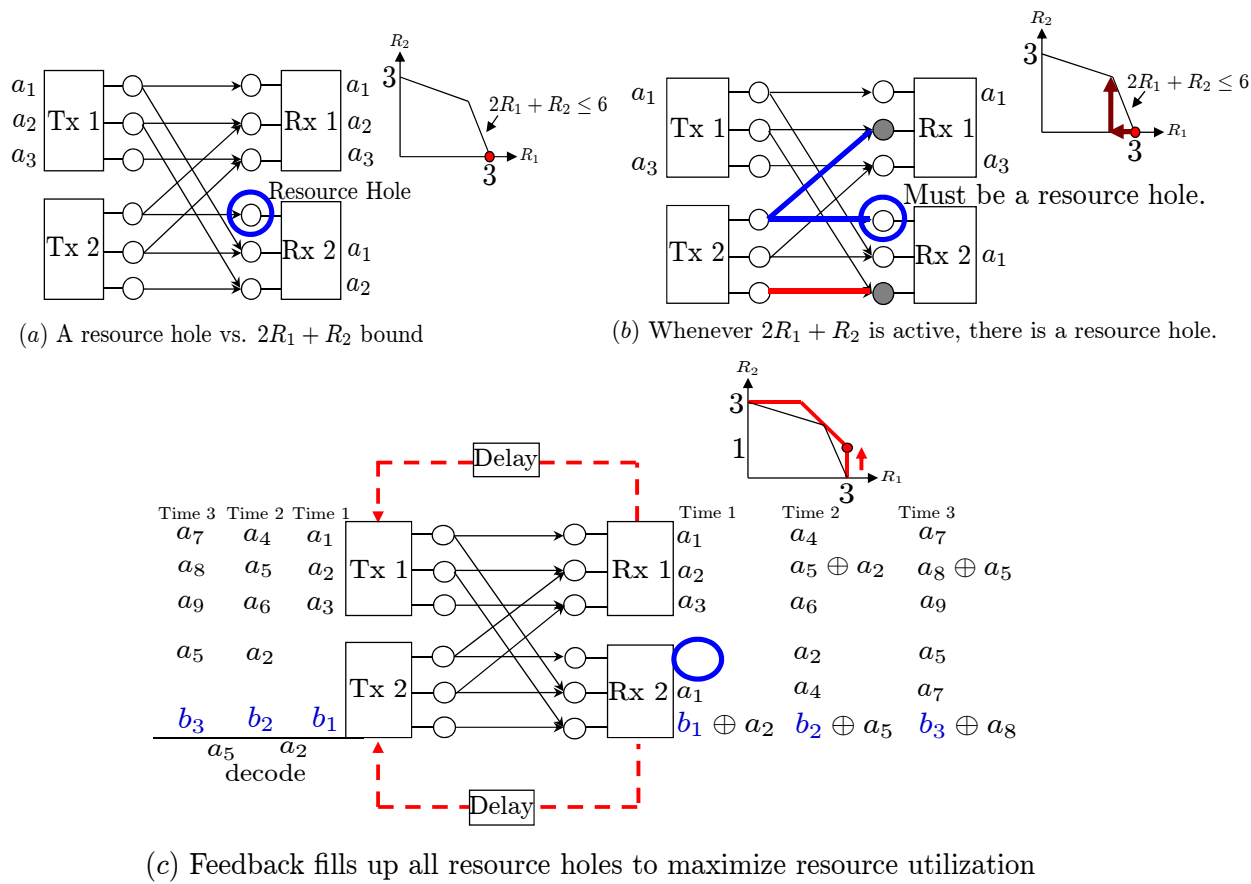, angle=0, width=1.0\textwidth}}
\end{center}
\caption{Relationship between a resource hole and $2R_1+R_2$ bound. The $2R_1 + R_2$ bound is missing with feedback.} \label{fig_nofeedback_hole}
\end{figure}

%
%

Suppose the $2R_1 + R_2$ bound is active. This implies that if $R_1$ is reduced by 1 bit, then $R_2$ should be increased by 2 bits. To decrease $R_1$ by 1 bit, suppose that transmitter 1 sends no information on the second signal level. We then see the two empty signal levels at the two receivers (marked as the gray balls): one at the second level at receiver 1; the other at the bottom level at receiver 2. Now transmitter 2 can send 1 bit on the bottom level to increase $R_2$ by 1 bit (marked as the thick red line). Also it allows transmitter 2 to send one more bit on the top level.
This implies that the top level at receiver 2 must be a resource hole in the previous case. 
This observation combined with the following observation can give an answer to the question.

Fig.~\ref{fig_nofeedback_hole} $(c)$ shows the feedback role that it fills up all the resource holes to maximize resource utilization.
We employ the same feedback strategy used in Fig.~\ref{fig_AchievabilityIdea} to obtain the result in Fig.~\ref{fig_nofeedback_hole} $(c)$. Notice that with feedback, all of the resource holes are filled up except a hole in the first stage, which can be amortized by employing an infinite number of stages. Therefore, we can now see why the $2R_1 + R_2$ bound is missing with feedback.

\section{Discussion}
\label{sec:Discussion}

\subsection{Comparison to Related Work \cite{Kramer:it02, Kramer:it04, GastparKramer:06}}
\label{sec:Comparison}

For the symmetric Gaussian IC, Kramer~\cite{Kramer:it02, Kramer:it04} developed a feedback strategy based on Schalkwijk-Kailath scheme~\cite{SK:it} and Ozarow's scheme~\cite{Ozarow:it}. Due to lack of closed-form rate-formula for the scheme, we cannot see how Kramer's scheme is close to our symmetric rate in Theorem~\ref{theorem-symmetric}. To see this, we compute the generalized degrees-of-freedom of Kramer's scheme.

\begin{lemma}
\label{lemma:KramerGDOF}
The generalized degrees-of-freedom of Kramer's scheme is given by
\begin{align}
\underline{d}(\alpha) = \left\{
                                \begin{array}{ll}
                                  1- \alpha, & \hbox{$0 \leq \alpha < \frac{1}{3}$;} \\
                                  \frac{3-\alpha}{4}, & \hbox{$\frac{1}{3} \leq \alpha < 1$;} \\
                                  \frac{1+\alpha}{4}, & \hbox{$\alpha \geq 1$.}
                                \end{array}
                              \right.
\end{align}
\end{lemma}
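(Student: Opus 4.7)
The plan is to start from Kramer's closed-form achievable rate for the symmetric Gaussian IC with feedback, which is expressed in terms of a correlation parameter $\rho \in [0,1]$ that captures the feedback-induced cooperation between the two transmitters. Kramer's scheme, being a Schalkwijk--Kailath/Ozarow style doubly-exponential scheme, yields a symmetric rate of the form $R_{\sf sym}(\rho^\star)$ where $\rho^\star$ satisfies a fixed-point equation determined by $\mathsf{SNR}$ and $\mathsf{INR}$. Since the g.d.o.f. is defined via $\lim_{\mathsf{SNR}\to\infty} R_{\sf sym}/\log\mathsf{SNR}$ with $\mathsf{INR}=\mathsf{SNR}^\alpha$, the proof reduces to an asymptotic analysis of this fixed-point equation and the resulting rate.

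First, I would substitute $\mathsf{INR}=\mathsf{SNR}^\alpha$ into Kramer's rate formula and parametrize the cooperation level as $1-\rho^2 = \mathsf{SNR}^{-\beta}$ for an unknown exponent $\beta\ge 0$, where $\beta$ measures how close feedback brings the transmitters to full cooperation at high $\mathsf{SNR}$. The rate expression contains competing terms that scale as $(1-\rho^2)\mathsf{SNR}$, $(1-\rho^2)\mathsf{INR}$, and the beamforming cross-term $2\rho\sqrt{\mathsf{SNR}\cdot\mathsf{INR}}$; each term contributes a different power of $\mathsf{SNR}$ depending on $\alpha$ and $\beta$. Solving the fixed-point equation asymptotically pins down the unique $\beta=\beta^\star(\alpha)$ that makes the equation self-consistent, and substituting $\beta^\star(\alpha)$ back into $R_{\sf sym}(\rho^\star)/\log\mathsf{SNR}$ yields the g.d.o.f.

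Next, I would identify the three regimes by tracking which of the competing terms dominates. For $\alpha<1/3$, the interference is weak enough that $\rho^\star$ stays bounded away from $1$ (i.e.\ $\beta^\star=0$), and the scheme essentially operates like treating interference as noise, giving $1-\alpha$. For $1/3\le\alpha<1$, the scheme benefits by pushing $\rho^\star$ toward $1$ at an intermediate rate, balancing the signal enhancement from beamforming against the effective noise from residual interference; this balance produces the $(3-\alpha)/4$ expression. For $\alpha\ge 1$, the strong cross-link means feedback-induced cooperation dominates and $\rho^\star\to 1$ at a different rate, yielding $(1+\alpha)/4$. The transitions at $\alpha=1/3$ and $\alpha=1$ arise precisely where two of these regimes give the same value, confirming continuity.

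The main obstacle will be carrying out the asymptotic analysis of the fixed-point equation cleanly enough to justify the exact exponents $\beta^\star(\alpha)$ in each regime; in particular, verifying that the intermediate $1/3\le\alpha<1$ regime corresponds to a non-trivial balance between three scaling terms (rather than just two) and that Kramer's formula does not admit a better choice of $\rho$ than the one produced by this balance. Once $\beta^\star(\alpha)$ is nailed down, the substitution into the rate formula and collection of the leading $\log\mathsf{SNR}$ coefficient is a routine computation yielding the three piecewise-linear expressions claimed in the lemma.
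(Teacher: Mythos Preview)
Your proposal is correct and follows essentially the same route as the paper: start from Kramer's symmetric rate formula together with the polynomial fixed-point equation determining the optimal correlation $\rho^\star$, substitute $\mathsf{INR}=\mathsf{SNR}^{\alpha}$, and do a case-by-case asymptotic analysis of $\rho^\star$ (equivalently, of $1-\rho^{\star 2}$) to extract the leading $\log\mathsf{SNR}$ coefficient in each of the three regimes. The paper finds $\rho^\star\to 0$ for $\alpha<1/3$ and $1-\rho^{\star 2}\approx \mathsf{SNR}^{(1-3\alpha)/4}$, $\mathsf{SNR}^{-(1+\alpha)/4}$ for the remaining two regimes, which is exactly the $\beta^\star(\alpha)$ computation you describe.
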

\begin{proof}
See Appendix~\ref{appendix:KramerGDOF}.
\end{proof}


\begin{figure}[!htp]
    \begin{minipage}[t]{0.5\linewidth}
        \centering
        \includegraphics[width=3.3in]{./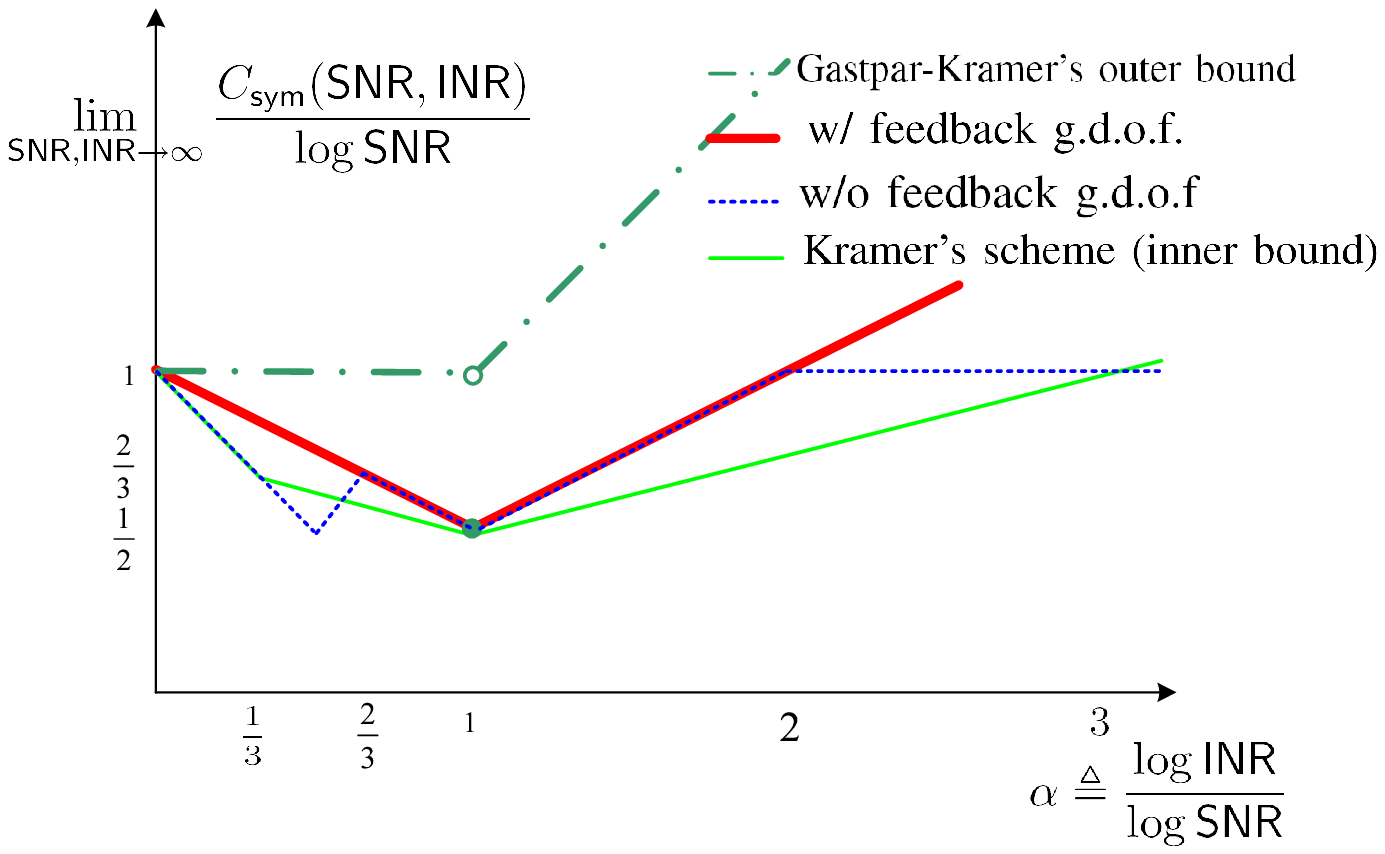}
        \caption{Generalized degrees-of-freedom comparison} \label{fig:gdof-compare}
    \end{minipage}%
    \begin{minipage}[t]{0.5\linewidth}
        \centering
        \includegraphics[width=2.5 in]{./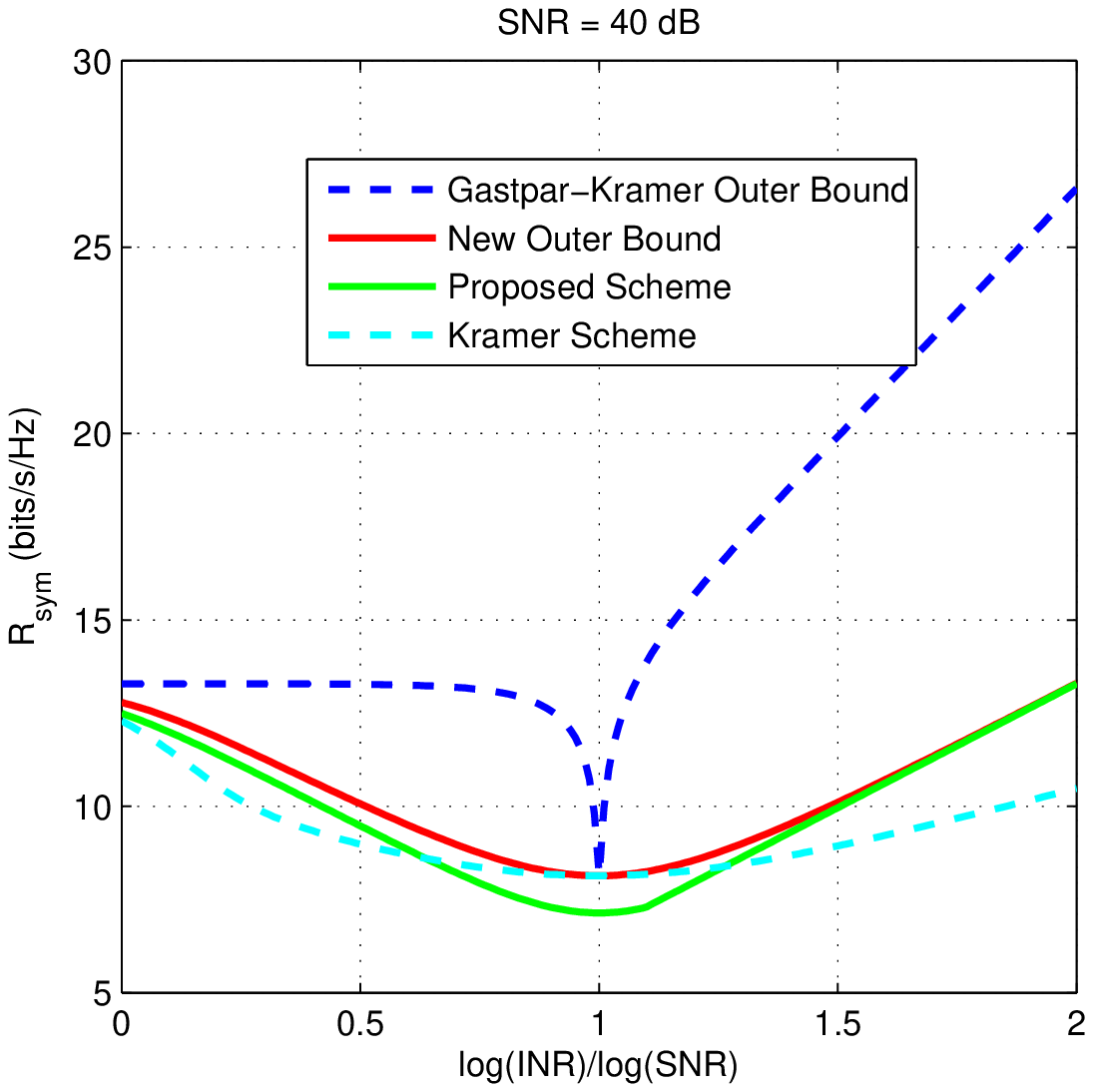}
        \caption{The symmetric rate comparison} \label{fig:ComparisonFinite}
    \end{minipage}
 \end{figure}

Note in Fig. \ref{fig:gdof-compare} that Kramer's scheme can be arbitrarily far from optimality, i.e., it has an unbounded gap to the symmetric capacity for all values of $\alpha$ except $\alpha = 1$. We also plot the symmetric rate for finite channel parameters as shown in Fig. \ref{fig:ComparisonFinite}. Notice that Kramer's scheme is very close to the outer bounds only when $\mathsf{INR}$ is similar to $\mathsf{SNR}$. In fact, the capacity theorem in \cite{Kramer:it04} says that they match each other at $\mathsf{INR} = \mathsf{SNR} - \sqrt{2 \mathsf{SNR}}$. However, if $\mathsf{INR}$ is quite different from $\mathsf{SNR}$, it becomes far away from the outer bounds. Also note that our new bound is much tighter than Gastpar-Kramer's outer bounds in~\cite{Kramer:it02, GastparKramer:06}.


\subsection{Closing the Gap}
\label{sec:closinggap}

\textit{Less than 1-bit gap to the symmetric capacity}: Fig.~\ref{fig:ComparisonFinite} implies that our achievable scheme can be improved especially when $\alpha \approx 1$ where beamforming gain plays a significant role. As mentioned earlier, our two-staged scheme completely loses beamforming gain. In contrast, Kramer's scheme captures the beamforming gain.
As discussed in Remark~\ref{remark:InfiniteVSTwo}, one may develop a unified scheme that beats both the schemes for all channel parameters to reduce the worst-case gap.

\textit{Less than 2-bit gap to the capacity region}:
As mentioned in Remark~\ref{remark:Why2Bits}, a 2-bit gap to the feedback capacity region can be improved up to a 1-bit gap. The idea is to remove message splitting. Recall that the Alamouti-based amplify-and-forward scheme in Theorem~\ref{theorem-symmetric} improves the performance by removing message splitting. Translating the same idea to the characterization of the capacity region is needed for the improvement. A noisy binary expansion model in Fig.~\ref{fig:noisy_binaryexp} may give insights into this.

\subsection{Extension to Gaussian MIMO ICs with Feedback}
The feedback capacity result for El Gamal-Costa model can be extended to Teletar-Tse IC~\cite{Teletar:isit07} where in Fig.~\ref{fig_ElGamalCosta}, $f_k$'s are deterministic functions satisfying El Gamal-Costa condition (\ref{eq-ElGamalCostaCondition}) while $g_k$'s follow arbitrary probability distributions. Once extended, one can infer an approximate feedback capacity region of the two-user Gaussian MIMO IC, as~\cite{Teletar:isit07} did in the non-feedback case.

\section{Conclusion}
\label{sec-Conclusion}

We have established the feedback capacity region to within 2 bits/s/Hz/user and the symmetric capacity to within 1 bit/s/Hz/user universally for the two-user  Gaussian IC with feedback. Alamouti's scheme inspires our two-staged achievable scheme meant for the symmetric rate. For an achievable rate region, we have employed block Markov encoding to incorporate an infinite number of stages. A new outer bound was derived to provide an approximate characterization of the capacity region. As a side-generalization, we have characterized the exact feedback capacity region of El Gamal-Costa deterministic IC.


An interesting consequence of our result is that feedback could provide \emph{multiplicative} gain in many-to-many channels unlike point-to-point, many-to-one, or one-to-many channels. Developing a resource hole interpretation, we have provided qualitative insights as to how feedback can provide significant gain even in the weak interference regime where there is no better alternative path. We have shown that feedback maximizes resource utilization to provide the gain.

\appendices

\section{Achievable Scheme for the Symmetric Rate of~(\ref{eq:SymmetricAchievableRate})}
\label{Appendix:AchiSch_Symmetric}

\begin{figure}[!htp]
\begin{center}
{\epsfig{figure=./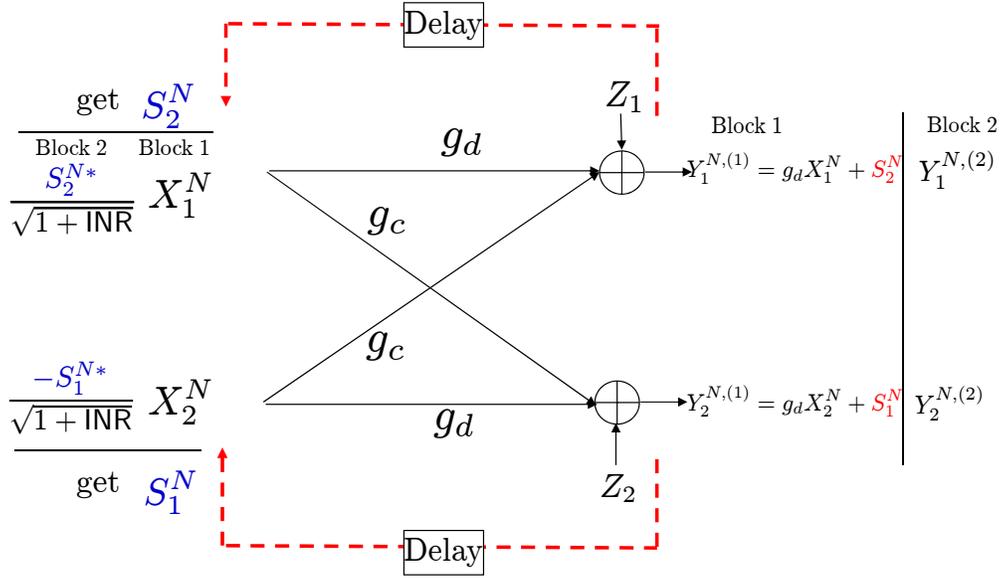, angle=0, width=0.8\textwidth}}
\end{center}
\caption{An achievable scheme in the symmetric Gaussian IC: Alamouti-based amplify-and-forward scheme} \label{fig:Symmetric_Unified}
\end{figure}

The scheme uses two stages (blocks). In the first stage, each transmitter $k$ sends codeword $X_k^{N}$ with rate $R_k$. In the second stage, with feedback transmitter 1 gets the interference plus noise: $S_2^{N} = g_c X_2^{N} + Z_1^{(1),N}$.
Now the complex conjugate technique based on Alamouti's scheme is applied to make $X_1^{N}$ and $S_2^{N}$ well separable.
Transmitters 1 and 2 send $\frac{S_2^{N*}}{\sqrt{1+ \sf INR}}$ and $-\frac{S_1^{N*}}{\sqrt{1+ \sf INR}}$, respectively, where $\sqrt{1+ \sf INR}$ is a normalization factor to meet the power constraint.

Receiver 1 can then gather the two received signals: for $1 \leq i \leq N$,
\begin{align}
\mathbf{Y}_i &\triangleq \left[
  \begin{array}{c}
    Y_{1i}^{(1)} \\
    Y_{1i}^{(2)*} \\
  \end{array}
\right]
= \left[
    \begin{array}{cc}
      g_d & 1 \\
      -\frac{ \mathsf{INR}}{\sqrt{1 + \mathsf{INR}}} & \frac{g_d^{*}}{\sqrt{1 + \mathsf{INR}}} \\
    \end{array}
  \right]
\left[
  \begin{array}{c}
    X_{1i} \\
    S_{2i} \\
  \end{array}
\right]  \quad + \left[
  \begin{array}{c}
    0 \\
    - \frac{g_c^{*}}{\sqrt{1 + \mathsf{INR}}}  Z_{2i}^{(1)} + Z_{1i}^{(2)*}  \\
  \end{array}
\right].
\end{align}
Under Gaussian input distribution, we can compute the rate under MMSE demodulation:
\begin{align}
\frac{1}{2} I(X_{1i}; \mathbf{Y}_i) = \frac{1}{2} h(\mathbf{Y}_i) - \frac{1}{2} h(\mathbf{Y}_i|X_{1i}) = \frac{1}{2} \log \frac{|K_{\mathbf{Y}_i}|}{|K_{\mathbf{Y}_i|X_{1i}}|}.
\end{align}
Straightforward calculations give
\begin{align}
\begin{split}
&|K_{\mathbf{Y}_i}| = \left| \left[
    \begin{array}{cc}
      1+  \mathsf{SNR} + \mathsf{INR} & \frac{g_d}{\sqrt{1 + \mathsf{INR}}} \\
     \frac{g_d^{*}}{\sqrt{1 + \mathsf{INR}}} &
    1+  \mathsf{SNR} + \mathsf{INR}\\
    \end{array}
  \right] \right| = (1+  \mathsf{SNR} + \mathsf{INR})^2 - \frac{\mathsf{SNR}}{1+  \mathsf{INR}}
\\
& |K_{\mathbf{Y}_i|X_{1i}}| = \left| \left[
    \begin{array}{cc}
      1+ \mathsf{INR} & g_d \sqrt{1 + \mathsf{INR}} \\
     g_d^{*} \sqrt{1 + \mathsf{INR}} &
    \mathsf{SNR} + \frac{2\mathsf{INR}+1}{\mathsf{INR}+1}  \\
    \end{array}
  \right] \right| =  1 + 2\mathsf{INR}.
\end{split}
\end{align}
Therefore, we get the desired result: the right term in (\ref{eq:SymmetricAchievableRate}).
\begin{align}
\label{eq:Rsym-impr}
R_{\sf sym} = \frac{1}{2} \log \left( \frac{ (1+ \sf SNR + \sf INR)^2 - \frac{\sf SNR}{1 + \sf INR} }{1 + 2 \sf INR} \right).
\end{align}

\label{sec:noisy_binaryexp}
\begin{figure}[!htp]
\begin{center}
{\epsfig{figure=./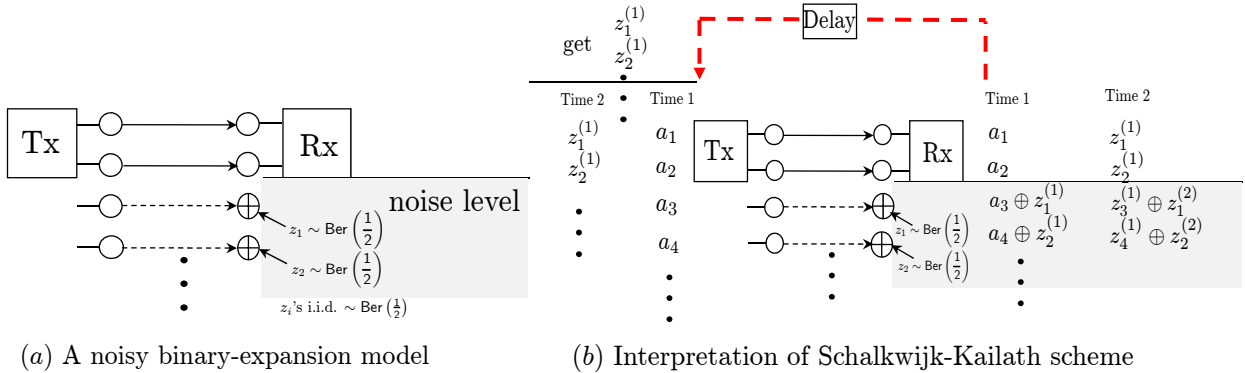, angle=0, width=1.0\textwidth}}
\end{center}
\caption{A noisy binary expansion model. Noise is assumed to be a ${\sf Ber}(\frac{1}{2})$ random variable i.i.d. across time slots (memoryless) and levels. This induces the same capacity as that of the deterministic channel, so it matches the Gaussian channel capacity in the high $\sf SNR$ regime.} \label{fig:noisy_binaryexp}
\end{figure}
\textbf{Intuition Behind the Proposed Scheme:} To provide intuition behind our proposed scheme, we introduce a new model what we call a noisy binary expansion model, illustrated in Fig.~\ref{fig:noisy_binaryexp} $(a)$. In the non-feedback Gaussian channel, due to the absence of noise information at transmitter, transmitter has no chance to refine the corrupted received signal. On the other hand, if feedback is allowed, noise can be learned. Sending noise information (innovation) can refine the corrupted signal: Schalkwijk-Kailath scheme \cite{SK:it}. However, a linear deterministic model cannot capture interplay between noise and signal. To capture this issue, we slightly modify the deterministic model so as to reflect the effect of noise. In this model, we assume that noise is a ${\sf Ber}(\frac{1}{2})$ random variable i.i.d. across time slots (memoryless) and levels. This induces the same capacity as that of the deterministic channel, so it matches the Gaussian channel capacity in the high $\sf SNR$ regime.

As a stepping stone towards the interpretation of the proposed scheme, let us first understand Schalkwijk-Kailath scheme~\cite{SK:it} using this model. Fig.~\ref{fig:noisy_binaryexp} $(b)$ illustrates an example where 2 bits/time can be sent with feedback. In time 1, transmitter sends independent bit streams $(a_1,a_2,a_3,a_4, \cdots) $. Receiver then gets $(a_1, a_2, a_3 \oplus z_1^{(1)}, a_4 \oplus z_2^{(1)}, \cdots)$ where $z_i^{(j)}$ indicates an i.i.d. ${\sf Ber}\left( \frac{1}{2} \right) $ random variable of noise level $i$ at time $j$. With feedback, transmitter can get noise information $(0, 0, z_1^{(1)}, z_2^{(1)}, \cdots )$ by subtracting the transmitted signals (sent previously) from the received feedback. This process corresponds to an MMSE operation in Schalkwijk-Kailath scheme: computing innovation. Transmitter scales the noise information to shift it by 2 levels and then sends the shifted version. The shifting operation corresponds to a scaling operation in Schalkwijk-Kailath scheme. Receiver can now recover $(a_3,a_4)$ corrupted by $(z_1^{(1)}, z_2^{(1)})$ in the previous slot. We repeat this procedure.

\begin{figure}[t]
\begin{center}
{\epsfig{figure=./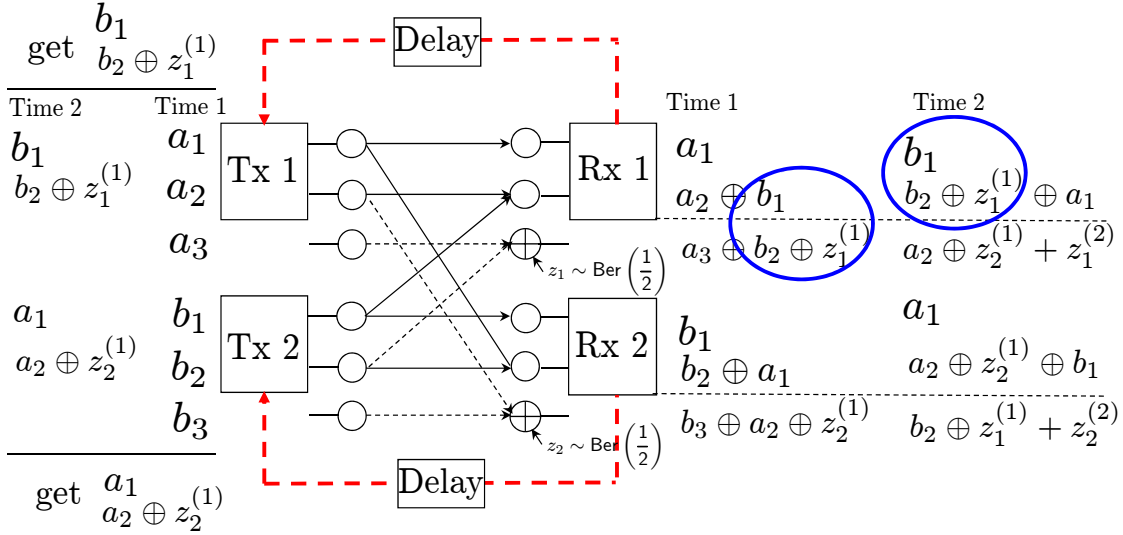, angle=0, width=0.9\textwidth}}
\end{center}
\caption{Intuition behind the Alamouti-based amplify-and-forward scheme.} \label{fig:Symmetric_Unified_BinaryModel}
\end{figure}

The viewpoint based on a binary expansion model can provide intuition behind our proposed scheme. See Fig.~\ref{fig:Symmetric_Unified_BinaryModel}.
%
In the first stage, each transmitter sends three independent bits: two bits above the noise level; one bit below the noise level. Transmitters 1 and 2 send $(a_1,a_2,a_3)$ and $(b_1,b_2,b_3)$, respectively. Receiver 1 then gets: (1) the clean signal $a_1$; (2) the interfered signal $a_2 \oplus b_1$; and (3) the interfered-and-noised signal $a_3 \oplus b_2 \oplus z_1^{(1)}$. Similarly for receiver 2. In the second stage, with feedback, each transmitter can get interference plus noise by subtracting the transmitted signals from the feedback. Transmitters 1 and 2 get $(0,b_1,b_2 \oplus z_1^{(1)})$ and $(0,a_1,a_2 \oplus z_2^{(1)})$, respectively. Next, each transmitter scales the subtracted signal subject to the power constraint and then forwards the scaled signal. Transmitters 1 and 2 send $(b_1,b_2 \oplus z_1^{(1)})$ and $(a_1,a_2 \oplus z_2^{(1)})$, respectively. Each receiver can then gather the two received signals to decode 3 bits. From this figure, one can see that it is not needed to send additional information on top of innovation in the second stage. Therefore, this scheme matches Alamouti-based amplify-and-forward scheme in the Gaussian channel.

%


\section{Proof of Lemma \ref{lemma:feedbackachievable}}
\label{Appendix:lemmaachievable}

\textbf{Codebook Generation:} Fix a joint distribution $p(u)p(u_1|u) p(u_2|u) p(x_1|u_1, u)p(x_2|u_2,u)$.
First generate $2^{N(R_{1c} +R_{2c}) }$ independent codewords $u^N(i,j)$, $i \in \{1, \cdots, 2^{NR_{1c}} \}$, $j\in \{1, \cdots, 2^{NR_{2c}} \}$, according to $\prod_{i=1}^{N} p(u_i)$. For each codeword $u^N(i,j)$, encoder 1 generates $2^{NR_{1c}}$ independent codewords $u_1^{N}((i,j),k)$, $k \in \{1, \cdots, 2^{NR_{1c}} \}$, according to $\prod_{i=1}^{N} p(u_{1i}|u_i)$. Subsequently, for each pair of codewords $\left( u^N(i,j),u_1^N((i,j),k) \right) $, generate $2^{NR_{1p}}$ independent codewords $x_1^N((i,j),k,l)$, $l \in \{1, \cdots, 2^{NR_{1p}} \}$, according to $\prod_{i=1}^{N} p(x_{1i}|u_{1i},u_i)$.

Similarly, for each codeword $u^{N}(i,j)$, encoder 2 generates $2^{NR_{2c}}$ independent codewords $u_2^{N}((i,j),r)$, $r \in \{1, \cdots, 2^{NR_{2c}} \}$, according to $\prod_{i=1}^{N} p(u_{2i}|u_i)$. For $ \left( u^N(i,j),u_2^N((i,j),r) \right)$, generate $2^{NR_{2p}}$ independent codewords $x_2^N((i,j),r,s)$, $s \in \{1, \cdots, 2^{NR_{2p}} \}$, according to $\prod_{i=1}^{N} p(x_{2i}|u_{2i},u_i)$.

\textit{Notation:} Notations are independently used only for this section. The index $k$ indicates the common message of user 1 instead of user index. The index $i$ is used for both purposes: (1) indicating the previous common message of user 1; (2) indicating time index. It could be easily differentiated from contexts.

\textbf{Encoding and Decoding:} We employ block Markov encoding with a total size $B$ of blocks. Focus on the $b$th block transmission. With feedback $y_1^{N,(b-1)}$, transmitter 1 tries to decode the message $\hat{w}_{2c}^{(b-1)} = \hat{k}$ (sent from transmitter 2 in the $(b-1)$th block). In other words, we find the unique $\hat{k}$ such that
\begin{align*}
\begin{split}
&\left( u^N \left( w_{1c}^{(b-2)}, \hat{w}_{2c}^{(b-2)} \right), u_1^N \left( ( w_{1c}^{(b-2)}, \hat{w}_{2c}^{(b-2)}), w_{1c}^{(b-1)} \right), \right. \\
&\; \; \left. x_1^N \left( (w_{1c}^{(b-2)}, \hat{w}_{2c}^{(b-2)} ), w_{1c}^{(b-1)},   w_{1p}^{(b-1)} \right), \right. \\
& \left. \;\; u_2^N \left( (w_{1c}^{(b-2)}, \hat{w}_{2c}^{(b-2)}), \hat{k} \right) , y_1^{N,(b-1)}  \right) \in A_{\epsilon}^{(N)},
\end{split}
\end{align*}
where $A_{\epsilon}^{(N)}$ indicates the set of jointly typical sequences.
Note that transmitter 1 already knows its own messages $(w_{1c}^{(b-2)}, w_{1c}^{(b-1)}, w_{1p}^{(b-1)})$. We assume that $\hat{w}_{2c}^{(b-2)}$ is correctly decoded from the previous block $(b-1)$. The decoding error occurs if one of two events happens: (1) there is no typical sequence; (2) there is another $\hat{w}_{2c}^{(b-1)}$ such that it is a typical sequence. By AEP, the first error probability becomes negligible as $N$ goes to infinity. By \cite{CoverThomas}, the second error probability becomes arbitrarily small (as $N$ goes to infinity) if
\begin{align}
\label{eq-R2c-constraint}
R_{2c} \leq I(U_2;Y_1|X_1,U).
\end{align}
Based on $(w_{1c}^{(b-1)},\hat{w}_{2c}^{(b-1)})$, transmitter 1 generates a new common message $w_{1c}^{(b)}$ and a private message $w_{1p}^{(b)}$. It then sends $x_1^N \left( (w_{1c}^{(b-1)}, \hat{w}_{2c}^{(b-1)} ),w_{1c}^{(b)},w_{1p}^{(b)} \right)$. Similarly transmitter 2 decodes $\hat{w}_{1c}^{(b-1)}$, generates $(w_{2c}^{(b)},w_{2p}^{(b)})$ and then sends $x_2^N \left((\hat{w}_{1c}^{(b-1)}, w_{2c}^{(b-1)} ),w_{2c}^{(b)},w_{2p}^{(b)} \right)$.

Each receiver waits until total $B$ blocks have been received and then does \emph{backward decoding}.
Notice that a block index $b$ starts from the last $B$ and ends to $1$. For block $b$, receiver 1 finds the unique triple $(\hat{i},\hat{j},\hat{k})$ such that
\begin{align*}
\begin{split}
\left( u^N \left( \hat{i}, \hat{j} \right), u_1^N \left( ( \hat{i}, \hat{j}) , \hat{w}_{1c}^{(b)} \right),  x_1^N \left( (\hat{i}, \hat{j}), \hat{w}_{1c}^{(b)},   \hat{k} \right), \right. \\
 \left. \;\; u_2^N \left( (\hat{i}, \hat{j}), \hat{w}_{2c}^{(b)} \right) , y_1^{N,(b)}  \right) \in A_{\epsilon}^{(N)},
\end{split}
\end{align*}
where we assumed that a pair of messages $(\hat{w}_{1c}^{(b)}, \hat{w}_{2c}^{(b)})$ was successively decoded from block $(b+1)$. Similarly receiver 2 decodes $(\hat{w}_{1c}^{(b-1)}, \hat{w}_{2c}^{(b-1)},\hat{w}_{2p}^{(b)})$.

\textbf{Error Probability:} By symmetry, we consider the probability of error only for block $b$ and for a pair of transmitter 1 and receiver 1.
We assume that $(w_{1c}^{(b-1)},w_{2c}^{(b-1)}, w_{1p}^{(b)}) = (1,1,1)$ was sent through block $(b-1)$ and block $b$; and there was no backward decoding error from block $B$ to $(b+1)$, i.e., $(\hat{w}_{1c}^{(b)}, \hat{w}_{2c}^{(b)})$ are successfully decoded.

Define an event:
\begin{align*}
E_{ijk} = \left\{ \left( u^{N}(i,j), u_1^{N}((i,j),\hat{w}_{1c}^{(b)}), x_1^{N}( (i,j),\hat{w}_{1c}^{(b)},k ) , \right. \right. \\
\left. \left. \;\; u_2^{N}((i,j),\hat{w}_{2c}^{(b)}), y_1^{N,(b)} \right)  \in A_{\epsilon}^{(N)} \right\}.
\end{align*}
By AEP, the first type of error becomes negligible. Hence, we focus only on the second type of error. Using the union bound, we get
\begin{align}
\begin{split}
\label{eq-errorprobability}
\textrm{Pr} & \left( \bigcup_{(i,j,k)  \neq (1,1,1) } E_{ijk}   \right) \leq \sum_{i \neq 1, j \neq 1, k \neq 1}\textrm{Pr}(E_{ijk} ) + \sum_{i \neq 1, j \neq 1, k =1}\textrm{Pr}(E_{ij1} )  + \sum_{i \neq 1, j = 1, k \neq 1}\textrm{Pr}(E_{i1k} ) \\
&\quad + \sum_{i \neq 1, j = 1, k =1}\textrm{Pr}(E_{i11} ) + \sum_{i = 1, j \neq 1, k \neq 1}\textrm{Pr}(E_{1jk} ) + \sum_{i = 1, j \neq 1, k =1}\textrm{Pr}(E_{1j1} ) + \sum_{i = 1, j = 1, k \neq 1}\textrm{Pr}(E_{11k} ) \\
& \leq 2^{N(R_{1c}+R_{2c}+R_{1p} - I(U,X_1,U_2;Y_1)+ 4 \epsilon)} +2^{N(R_{1c}+R_{2c} - I(U,X_1,U_2;Y_1)+ 4 \epsilon)} +2^{N(R_{1c}+R_{1p} - I(U,X_1,U_2;Y_1)+ 4 \epsilon)} \\
 & \; +2^{N(R_{1c} - I(U,X_1,U_2;Y_1)+ 4 \epsilon)}  +2^{N(R_{2c} + R_{1p} - I(U,X_1,U_2;Y_1)+ 4 \epsilon)}  +2^{N(R_{2c} - I(U,X_1,U_2;Y_1)+ 4 \epsilon)} \\
 & \; +2^{N(R_{1p} - I(X_1;Y_1|U,U_1,U_2)+ 4 \epsilon)}.
\end{split}
\end{align}

From (\ref{eq-R2c-constraint}) and (\ref{eq-errorprobability}), we can say that the error probability can be made arbitrarily small if
\begin{align}
&\left\{
  \begin{array}{ll}
    R_{2c} &\leq I (U_2; Y_1| X_1 ,U) \\
    R_{1p} &\leq I (X_1; Y_1| U_1, U_2,U) \\
    R_{1c}+R_{1p}+R_{2c} &\leq I(U, X_1, U_2; Y_1)
  \end{array}
\right. \\
&\left\{
  \begin{array}{ll}
    R_{1c} &\leq I (U_1; Y_2| X_2 ,U) \\
    R_{2p} &\leq I (X_2; Y_2| U_1, U_2,U) \\
    R_{2c}+R_{2p}+R_{1c} &\leq I(U, X_2, U_1; Y_2).
  \end{array}
\right.
\end{align}

\textbf{Fourier-Motzkin Elimination:} Applying Fourier-Motzkin elimination, we easily obtain the desired inequalities.
There are several steps to remove $R_{1p}$, $R_{2p}$, $R_{1c}$, and $R_{2c}$, successively.
First substitute $R_{1p} = R_1 - R_{1c}$ and $R_{2p} = R_2 - R_{2c}$ to get:
\begin{align}
R_{2c} &\leq I (U_2; Y_1| X_1 ,U) &:= a_1\\
R_1 - R_{1c} &\leq I (X_1; Y_1| U_1, U_2,U) &:= a_2\\
 R_1 +R_{2c} &\leq I(U, X_1, U_2; Y_1) &:= a_3\\
R_{1c} &\leq I (U_1; Y_2| X_2 ,U) &:= b_1 \\
 R_2 - R_{2c} &\leq I (X_2; Y_2| U_1, U_2,U) &:= b_2\\
R_2 +R_{1c} &\leq I(U, X_2, U_1; Y_2) &:= b_3 \\
-R_{1c} & \leq 0 \\
-R_1 + R_{1c} & \leq 0 \\
-R_{2c} & \leq 0 \\
-R_2 + R_{2c} & \leq 0
\end{align}

Categorize the above inequalities into the following three groups: (1) group 1 not containing $R_{1c}$; (2) group 2 containing \emph{negative} $R_{1c}$; (3) group 3 containing \emph{positive} $R_{1c}$. By adding each inequality from groups 2 and 3, we remove $R_{1c}$. Rearranging the inequalities with respect to $R_{2c}$, we get:


%
\begin{align}
R_{1} &\leq  b_1 + a_2\\
R_2 +R_{1} &\leq b_5  + a_2\\
-R_1  & \leq 0 \\
R_{2c} &\leq  a_1\\
R_1 +R_{2c} &\leq  a_5\\
-R_2 + R_{2c} & \leq 0 \\
R_2 - R_{2c} &\leq  b_2\\
 -R_{2c} & \leq 0.
\end{align}

Adding each inequality from groups 2 and 3, we remove $R_{2c}$ and finally obtain:
%
\begin{align}
R_{1} &\leq  \min(a_5, b_1 + a_2)\\
R_{2} &\leq  \min(b_5, a_1 + b_2)\\
R_1 +R_{2} &\leq \min(b_5  + a_2, a_5 + b_2).
\end{align}

\section{Converse Proof of Theorem~\ref{theorem-ElGamalCosta}}
\label{Appendix:ElGamalCostaConverse}

For completeness, we provide the detailed proof, although there are many overlaps with the proof in Theorem~\ref{theorem:outerbound}. The main point of the converse is how to introduce an auxiliary random variable $U$ which satisfies that given $U_i$, $X_{1i}$ is conditionally independent of $X_{2i}$. Claim~\ref{claim-3} gives hint into this. It gives the choice of $U_i:=(V_1^{i-1},V_2^{i-1})$.

First we consider the upper bound of an individual rate.
\begin{align*}
\begin{split}
NR_1 &= H(W_1) \overset{(a)}{\leq} I(W_1;Y_1^{N}) +  N\epsilon_N \overset{(b)}{\leq} \sum  H(Y_{1i}) + N\epsilon_N\\
\end{split}
\end{align*}
where $(a)$ follows from Fano's inequality and $(b)$ follows from the fact that entropy is non-negative and conditioning reduces entropy.

Now consider the second bound.
\begin{align*}
\begin{split}
N&R_1 = H(W_1) = H(W_1|W_2) \\
&\leq I(W_1;Y_1^N|W_2) + N \epsilon_N \leq I(W_1;Y_1^N,Y_2^{N}|W_2) + N \epsilon_N \\
&\overset{(a)}= \sum H(Y_{1i},Y_{2i}|W_2,Y_1^{i-1},Y_2^{i-1}) + N \epsilon_N \\
&\overset{(b)}{=} \sum H(Y_{1i},Y_{2i}|W_2, Y_1^{i-1},Y_2^{i-1},X_{2}^{i}) + N \epsilon_N \\
&\overset{(c)}{=} \sum H(Y_{2i}|W_2,Y_1^{i-1},Y_2^{i-1}, X_{2}^{i}) \\
& \quad + \sum H(Y_{1i}|W_2, Y_1^{i-1},Y_2^{i-1}, X_{2}^{i},Y_{2i},V_{1}^{i}) + N \epsilon_N \\
&\overset{(d)}{\leq} \sum \left[ H(Y_{2i}|X_{2i}, U_i) + H(Y_{1i}|V_{1i},V_{2i},U_i) \right] + N \epsilon_N
\end{split}
\end{align*}
where ($a$) follows from the fact that $(Y_1^{N},Y_2^{N})$ is a function of $(W_1,W_2)$; $(b)$ follows from the fact that $X_{2}^{i}$ is a function of $(W_2,Y_2^{i-1})$; ($c$) follows from the fact that $V_{1}^{i}$ is a function of $(Y_{2}^{i},X_{2}^{i})$; ($d$) follows from the fact that $V_1^{i-1}$ is a function of $(Y_{2}^{i-1},X_2^{i-1})$, $V_2^{i-1}$is a function of $X_2^{i-1}$, and conditioning reduces entropy. Similarly we get the outer bound for $R_2$.

The sum rate bound is given as follows.
\begin{align*}
\begin{split}
N&(R_1 + R_2)= H(W_1) + H(W_2) =  H(W_1|W_2)+ H(W_2) \\
&\leq I(W_1;Y_1^{N}|W_2) + I(W_2;Y_2^{N}) + N \epsilon_N \\
& = H(Y_1^{N}|W_2) + I(W_2;Y_2^{N}) + N \epsilon_N \\
&= H(Y_1^{N}|W_2) + H(Y_2^{N}) \\
&\quad - \left\{ H(Y_1^{N}, Y_2^{N}|W_2) - H(Y_1^{N}|Y_2^{N},W_2) \right\} + N \epsilon_N \\
& = H(Y_1^{N}|Y_2^{N},W_2) - H(Y_2^{N}|Y_1^{N},W_2) + H(Y_2^N) + N \epsilon_N \\
&\overset{(a)}{=} \sum H(Y_{1i} | Y_1^{i-1}, Y_2^{N}, W_2, X_2^{i},V_1^i) + H(Y_2^N) +  N \epsilon_N \\
&\overset{(b)}{\leq} \sum \left[ H(Y_{1i}|V_{1i},V_{2i},U_i) + H(Y_{2i}) \right] + N \epsilon_N
\end{split}
\end{align*}
where ($a$) follows from the fact that $X_2^{i}$ is a function of $(W_2,Y_2^{i-1})$ and $V_1^{i}$ is a function of $(X_2^{i},Y_2^{i})$; ($b$) follows from the fact that $V_2^i$ is a function of $X_2^i$ and conditioning reduces entropy. Similarly, we get the other outer bound:
\begin{align*}
N(R_1 + R_2) \leq \sum \left[ H(Y_{2i}|V_{1i},V_{2i},U_i) + H(Y_{1i}) \right] + N \epsilon_N.
\end{align*}

Now let a time index $Q$ be a random variable uniformly distributed over the set $\{1,2,\cdots,N\}$ and independent of $(W_1,W_2,X_1^N,X_2^N,Y_1^N,Y_2^N)$.
We define
\begin{align}
\begin{split}
&X_1 = X_{1Q}, \;V_1 = V_{1Q}; \;X_2 = X_{2Q}, \;V_2 = V_{1Q}, \\
&Y_1 = Y_{1Q}, \;Y_2 = Y_{2Q}; \;U = (U_Q, Q).
\end{split}
\end{align}
If $(R_1,R_2)$ is achievable, then $\epsilon_N \rightarrow 0$ as $N \rightarrow \infty$.
By Claim \ref{claim-3}, an input joint distribution satisfies $p(u,x_1,x_2)=p(u)p(x_1|u)p(x_2|u)$.
This establishes the converse.

\begin{claim}
\label{claim-3}
Given $U_i = (V_1^{i-1},V_2^{i-1})$, $X_{1i}$ and $X_{2i}$ are conditionally independent.
\end{claim}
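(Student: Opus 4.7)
The plan is to reduce the claim to a statement about conditional independence of the two messages given the past $V$-sequences, exploiting the deterministic structure $Y_1 = g_1(X_1,V_2)$, $Y_2 = g_2(X_2,V_1)$ together with the fact that $V_k$ is a function of $X_k$.

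First I would establish, by induction on $i$, the refined functional statement that $X_{1i}$ is a deterministic function of $(W_1,V_2^{i-1})$ only (and symmetrically that $X_{2i}$ is a function of $(W_2,V_1^{i-1})$). The base case $i=1$ is immediate because $X_{11}$ depends only on $W_1$. For the inductive step, observe that $X_{1i}$ is a function of $(W_1,Y_1^{i-1})$ by the feedback encoding rule, and $Y_{1j}=g_1(X_{1j},V_{2j})$ for each $j<i$; by the inductive hypothesis $X_{1j}$ is itself a function of $(W_1,V_2^{j-1})$, so $Y_1^{i-1}$ is determined by $(W_1,V_2^{i-1})$ and the conclusion follows. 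As a byproduct, $V_{1i}=\psi_1(X_{1i})$ is a function of $(W_1,V_2^{i-1})$, and likewise $V_{2i}$ is a function of $(W_2,V_1^{i-1})$.

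Second, I would use the above to show that the joint law of $(W_1,W_2,V_1^{i-1},V_2^{i-1})$ factorizes in the product form needed for the factorization criterion. Because the messages are independent,
\begin{align*}
p(w_1,w_2,v_1^{i-1},v_2^{i-1})
&= p(w_1)p(w_2)\prod_{j=1}^{i-1}\mathbf{1}\bigl[v_{1j}=\phi_{1j}(w_1,v_2^{j-1})\bigr]\mathbf{1}\bigl[v_{2j}=\phi_{2j}(w_2,v_1^{j-1})\bigr],
\end{align*}
which splits into a factor $\alpha(w_1,v_1^{i-1},v_2^{i-1})$ and a factor $\beta(w_2,v_1^{i-1},v_2^{i-1})$. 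By the factorization theorem this yields $W_1 \perp W_2 \mid U_i$. Since, conditional on $U_i=(V_1^{i-1},V_2^{i-1})$, $X_{1i}$ is a deterministic function of $W_1$ alone and $X_{2i}$ is a deterministic function of $W_2$ alone, the desired conditional independence $X_{1i}\perp X_{2i}\mid U_i$ is immediate.

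The conceptually delicate point, and what I expect to be the main subtlety, is that $V_1^{i-1}$ and $V_2^{i-1}$ are coupled through the feedback: $V_1^{i-1}$ depends in part on $W_2$ (via the past $V_2$-values), and symmetrically for $V_2^{i-1}$. One might therefore worry that conditioning on $(V_1^{i-1},V_2^{i-1})$ could create hidden statistical coupling between $W_1$ and $W_2$. The inductive refinement in the first step is precisely what resolves this: it shows that, step by step, each new $V_{1j}$ is a function of $W_1$ and of the \emph{other} message's $V$-history, so the indicator factors cleanly separate into a $W_1$-part and a $W_2$-part. The deterministic assumptions $H(V_2\mid Y_1,X_1)=0$ and $H(V_1\mid Y_2,X_2)=0$ enter exactly here, by allowing $V_2^{i-1}$ to substitute for $Y_1^{i-1}$ in the encoder's dependence.
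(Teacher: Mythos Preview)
Your proof is correct and reaches the same conclusion through the same intermediate step ($W_1\perp W_2\mid U_i$, then lift to $X_{1i}\perp X_{2i}\mid U_i$ via Claim~\ref{claim-1}), but your argument for that intermediate step differs from the paper's. The paper follows the dependence--balance--bound technique of Hekstra--Willems: it expands $I(W_1;W_2\mid U_i)-I(W_1;W_2)$ by the chain rule into a telescoping sum over $j<i$, uses the functional relations $V_{1j}=\phi_{1j}(W_1,V_2^{j-1})$ and $V_{2j}=\phi_{2j}(W_2,V_1^{j-1})$ to kill two of the entropy terms, and then bounds the remainder by $0$ using ``conditioning reduces entropy.'' You instead write down the joint pmf of $(W_1,W_2,V_1^{i-1},V_2^{i-1})$ explicitly as a product of indicators and observe that it factors as $\alpha(w_1,u_i)\beta(w_2,u_i)$, invoking the factorization criterion for conditional independence.

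Both arguments rest on exactly the same structural fact (your inductive step, the paper's Claim~\ref{claim-1}). Your route is more elementary and self-contained, and makes the source of the independence completely transparent: the $V_{1}$-indicators involve only $w_1$ and the $V_{2}$-indicators involve only $w_2$, so there is nothing to couple the messages once the full $V$-history is fixed. The paper's route has the advantage of being the standard tool in this literature and of extending more readily beyond deterministic channels (where indicator-product factorizations are unavailable), but for the present deterministic setting your argument is equally valid and arguably cleaner.
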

\begin{proof}
The proof is based on the dependence-balance-bound technique in \cite{Willems:it82, Willems:it89}.
For completeness we describe details. First we show that $I(W_1;W_2|U_i)=0$, which implies that $W_1$ and $W_2$ are independent given $U_i$. Based on this, we show that $X_{1i}$ and $X_{2i}$ are conditionally independent given $U_i$.

Consider
\begin{align*}
0 &\leq I(W_1;W_2|U_i) \overset{(a)}{=} I(W_1;W_2|U_i) - I(W_1;W_2) \\
& \overset{(b)}{=} - H(W_1) - H(W_2) - H(U_i) + H(W_1, W_2) \\
& \;\; + H(W_1,U_i) + H(W_2,U_i) - H(W_1,W_2, U_i) \\
& \overset{(c)}{=} - H(U_i) + H(U_i|W_1) + H(U_i|W_2) \\
& = \sum_{j=1}^{i-1} \left[ - H(V_{1j},V_{2j} |V_1^{j-1}, V_2^{j-1})   \right. \\
&  \quad \qquad + H(V_{1j},V_{2j}|W_1, V_1^{j-1}, V_2^{j-1}) \\
& \quad \qquad \left. +   H(V_{1j},V_{2j} |W_2, V_1^{j-1}, V_2^{j-1}) \right] \\
& \overset{(d)}{=} \sum_{j=1}^{i-1} \left[ - H(V_{1j},V_{2j} |V_1^{j-1}, V_2^{j-1})  \right. \\
& \left. \qquad + H(V_{2j}|W_1, V_1^{j}, V_2^{j-1}) + H(V_{1j}|W_2, V_1^{j-1}, V_2^{j} ) \right] \\
& = \sum_{j=1}^{i-1} \left[ - H(V_{1j} |V_1^{j-1}, V_2^{j-1} ) + H(V_{1j} |W_2, V_1^{j-1}, V_2^{j}) \right. \\
&\left. \qquad - H(V_{2j}|V_1^{j}, V_2^{j-1}) + H(V_{2j}|W_1, V_1^{j}, V_2^{j-1}) \right] \\
& \overset{(e)} \leq 0
\end{align*}
where ($a$) follows from $I(W_1;W_2)=0$; ($b$) follows from the chain rule; ($c$) follows from the chain rule and $H(U_i|W_1,W_2)=0$; ($d$) follows from the fact that $V_1^{j}$ is a function of $(W_1,V_2^{j-1})$ and $V_2^{j}$ is a function of $(W_2,V_1^{j-1})$ (see Claim \ref{claim-1}); ($e$) follows from the fact that conditioning reduces entropy.
Therefore, $I(W_1;W_2|U_i)=0$, which shows the independence of $W_1$ and $W_2$ given $u_i$.

Notice that $X_{1i}$ is a function of $(W_1,V_2^{i-1})$ and $X_{2i}$ is a function of $(W_2,V_1^{i-1})$ (see Claim \ref{claim-1}). Hence, it follows easily that
\begin{align}
I(X_{1i};X_{2i}|U_i) = I(X_{1i};X_{2i}|V_{1}^{i-1}, V_2^{i-1} ) = 0,
\end{align}
which proves the independence of $X_{1i}$ and $X_{2i}$ given $U_i$.
\end{proof}

\begin{claim}
\label{claim-1}
For $i\geq 1$, $X_1^{i}$ is a function of $(W_1,V_2^{i-1})$. Similarly, $X_2^{i}$ is a function of $(W_2,V_1^{i-1})$.
\end{claim}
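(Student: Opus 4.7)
The plan is to prove Claim~\ref{claim-1} by induction on $i$, treating only the first statement since the second follows by symmetry. The key ingredients are: (i) the feedback encoding rule $X_{1i} = f_1^{i}(W_1, Y_1^{i-1})$; (ii) the El Gamal-Costa structural property that $Y_1$ is a deterministic function of $(X_1, V_2)$ (this is what makes $V_2$ the ``part of $X_2$ visible to receiver~1''); and (iii) the fact that $V_{2j}$ is itself a function of $X_{2j}$.

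For the base case $i=1$, the encoder has received no feedback, so $X_{11} = f_1^{1}(W_1)$, which is trivially a function of $(W_1, V_2^{0})$. For the inductive step, suppose $X_1^{i-1}$ is a function of $(W_1, V_2^{i-2})$. By the El Gamal-Costa property applied coordinate-wise, $Y_1^{i-1}$ is a function of $(X_1^{i-1}, V_2^{i-1})$. Substituting the inductive hypothesis gives that $Y_1^{i-1}$ is a function of $(W_1, V_2^{i-2}, V_2^{i-1}) = (W_1, V_2^{i-1})$. Plugging this into the feedback encoding rule yields that $X_{1i} = f_1^{i}(W_1, Y_1^{i-1})$ is a function of $(W_1, V_2^{i-1})$, and combining with the inductive hypothesis shows that $X_1^{i} = (X_1^{i-1}, X_{1i})$ is a function of $(W_1, V_2^{i-1})$, completing the induction.

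The argument is essentially a bookkeeping exercise, so I do not anticipate a real obstacle. The one subtlety worth stating explicitly is the appeal to the El Gamal-Costa condition (\ref{eq-ElGamalCostaCondition}): the hypothesis $H(V_2 | Y_1, X_1) = 0$ is the \emph{converse} direction (that $V_2$ is determined by $(Y_1, X_1)$), so what I actually need is the forward direction, namely that $Y_1$ is a deterministic function of $(X_1, V_2)$. This is precisely the defining structure of the El Gamal-Costa model shown in Fig.~\ref{fig_ElGamalCosta}, where $Y_1 = f_1(X_1, V_2)$ for some deterministic $f_1$; I would invoke this structure explicitly at the start of the proof so that the reader sees which half of the model is being used.
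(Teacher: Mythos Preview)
Your induction is correct and is essentially a rigorous version of the paper's own argument: the paper argues informally that ``information of $W_2$ to the first link pair must pass through $V_{2i}$'' together with the one-step feedback delay, which is exactly what your inductive step formalizes via $Y_{1j}=f_1(X_{1j},V_{2j})$ and $X_{1i}=f_1^{i}(W_1,Y_1^{i-1})$. Your explicit remark that the needed direction is $Y_1=f_1(X_1,V_2)$ (the model structure in Fig.~\ref{fig_ElGamalCosta}), rather than the condition $H(V_2|Y_1,X_1)=0$, is a useful clarification the paper leaves implicit.
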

\begin{proof}
By symmetry, it is enough to prove it only for $X_1^{i}$.
Since the channel is deterministic (noiseless), $X_1^{i}$ is a function of $(W_1, W_2)$. In Fig. \ref{fig_ElGamalCosta}, we see that information of $W_2$ to the first link pair must pass through $V_{2i}$. Also note that $X_{1i}$ depends on the past output sequences until $i-1$ (due to feedback delay). Therefore, $X_1^{i}$ is a function of $(W_1,V_2^{i-1})$.
\end{proof}

\section{Proof of Lemma~\ref{lemma:KramerGDOF}}
\label{appendix:KramerGDOF}

Let $\mathsf{INR} = \mathsf{SNR}^{\alpha}$. Then, by (29) in \cite{Kramer:it02} and (77*) in \cite{Kramer:it04}, we get
\begin{align}
R_{\mathsf{sym}} = \log \left( \frac{1 + \mathsf{SNR} + \mathsf{SNR}^{\alpha} + 2\rho^{*} \mathsf{SNR}^{\frac{\alpha+1}{2}} }{ 1 + (1-\rho^{*2}) \mathsf{SNR}^{\alpha}}  \right),
\end{align}
where $\rho^{*}$ is the solution between 0 and 1 such that
\begin{align*}
&2 \mathsf{SNR}^{\frac{3 \alpha +1 }{2}} \rho^{*4} +  \mathsf{SNR}^{ \alpha } \rho^{*3} - 4 ( \mathsf{SNR}^{\frac{3 \alpha +1 }{2}} + \mathsf{SNR}^{\frac{\alpha +1 }{2}} ) \rho^{*2} \\
&- ( 2 + \mathsf{SNR} + 2 \mathsf{SNR}^{ \alpha } ) \rho^{*} + 2 ( \mathsf{SNR}^{\frac{3 \alpha +1 }{2}} + \mathsf{SNR}^{\frac{\alpha +1 }{2}} ) = 0.
\end{align*}
Notice that for $0 \leq \alpha \leq \frac{1}{3}$ and for the high $\sf SNR$ regime, $\mathsf{SNR}$ is a dominant term and  $0 < \rho^{*} <1 $. Hence, we get $\rho^{*} \approx 2 \mathsf{SNR}^{\frac{3 \alpha -1 }{2}}$. This gives $\lim_{\mathsf{SNR} \rightarrow \infty} \frac{R_{\mathsf{sym}}}{\log(\mathsf{SNR})}  = 1 - \alpha$. For $\frac{1}{3} <  \alpha < 1$, the first and second dominant terms become $\mathsf{SNR}^{\frac{3 \alpha +1}{2}}$ and $\mathsf{SNR}$, respectively. Also for this regime, $\rho^{*} \approx 1$. Hence, we approximately get $1-\rho^{*2} \approx \mathsf{SNR}^{\frac{-3 \alpha+1}{4}}$. This gives $\lim_{\mathsf{SNR} \rightarrow \infty} \frac{R_{\mathsf{sym}}}{\log(\mathsf{SNR})}  =  \frac{ 3 - \alpha}{4}$.
For $\alpha \geq 1$, note that the first and second dominant terms are $\mathsf{SNR}^{\frac{3 \alpha +1}{2}}$ and $\mathsf{SNR}$; and $\rho^{*}$ is very close to 1. So we get $1-\rho^{*2} \approx \mathsf{SNR}^{-\frac{ \alpha+1}{4}}$. This gives the desired result in the last case.

\bibliographystyle{ieeetr}
\bibliography{bib_feedback}

\begin{thebibliography}{10}

\bibitem{shannon:it}
C.~E. Shannon, ``The zero error capacity of a noisy channel,'' {\em IRE
  Transactions on Information Theory}, Sept. 1956.

\bibitem{Cover:it89}
T.~M. Cover and S.~Pombra, ``Gaussian feedback capacity,'' {\em IEEE
  Transactions on Information Theory}, vol.~35, pp.~37--43, Jan. 1989.

\bibitem{Kim:it06}
Y.-H. Kim, ``Feedback capacity of the first-order moving average {G}aussian
  channel,'' {\em IEEE Transactions on Information Theory}, vol.~52,
  pp.~3063--3079, July 2006.

\bibitem{Gaarder:it}
N.~T. Gaarder and J.~K. Wolf, ``The capacity region of a multiple-access
  discrete memoryless channel can increase with feedback,'' {\em IEEE
  Transactions on Information Theory}, Jan. 1975.

\bibitem{Ozarow:it}
L.~H. Ozarow, ``The capacity of the white {G}aussian multiple access channel
  with feedback,'' {\em IEEE Transactions on Information Theory}, July 1984.

\bibitem{Salman:allterton07}
S.~Avestimehr, S.~Diggavi, and D.~Tse, ``A deterministic approach to wireless
  relay networks,'' {\em Proceedings of Allerton Conference on Communication,
  Control, and computing.}, Sept. 2007.

\bibitem{Alamouti:jsac98}
S.~M. Alamouti, ``A simple transmit diversity technique for wireless
  communication,'' {\em IEEE Journal on Selected Areas in Communications},
  vol.~16, pp.~1451--1458, Oct. 1998.

\bibitem{dtse:it07}
R.~Etkin, D.~Tse, and H.~Wang, ``Gaussian interference channel capacity to
  within one bit,'' {\em IEEE Transactions on Information Theory}, vol.~54,
  pp.~5534--5562, Dec. 2008.

\bibitem{Cover:it79}
T.~M. Cover and A.~A. El-Gamal, ``Capacity theorems for the relay channel,''
  {\em IEEE Transactions on Information Theory}, vol.~25, pp.~572--584, Sept.
  1979.

\bibitem{Cover:it81}
T.~M. Cover and C.~S.~K. Leung, ``An achievable rate region for the
  multiple-access channel with feedback,'' {\em IEEE Transactions on
  Information Theory}, vol.~27, pp.~292--298, May 1981.

\bibitem{Kuhlmann:it89}
F.~Kuhlmann, C.~M. Zeng, and A.~Buzo, ``Achievability proof of some multiuser
  channel coding theorems using backward decoding,'' {\em IEEE Transactions on
  Information Theory}, vol.~35, pp.~1160--1165, Nov. 1989.

\bibitem{HanKoba:it81}
T.~S. Han and K.~Kobayashi, ``A new achievable rate region for the interference
  channel,'' {\em IEEE Transactions on Information Theory}, vol.~27,
  pp.~49--60, Jan. 1981.

\bibitem{ElGamal:it82}
A.~El-Gamal and M.~H. Costa, ``The capacity region of a class of deterministic
  interference channels,'' {\em IEEE Transactions on Information Theory},
  vol.~28, pp.~343--346, Mar. 1982.

\bibitem{Teletar:isit07}
E.~Telatar and D.~Tse, ``Bounds on the capacity region of a class of
  interference channels,'' {\em IEEE International Symposium on Information
  Theory}, June 2007.

\bibitem{Kramer:it02}
G.~Kramer, ``Feedback strategies for white {G}aussian interference networks,''
  {\em IEEE Transactions on Information Theory}, vol.~48, pp.~1423--1438, June
  2002.

\bibitem{Kramer:it04}
G.~Kramer, ``Correction to ``{F}eedback strategies for white {G}aussian
  interference networks'', and a capacity theorem for {G}aussian interference
  channels with feedback,'' {\em IEEE Transactions on Information Theory},
  vol.~50, June 2004.

\bibitem{GastparKramer:06}
M.~Gastpar and G.~Kramer, ``On noisy feedback for interference channels,'' {\em
  In Proc. Asilomar Conference on Signals, Systems, and Computers}, Oct. 2006.

\bibitem{Jiang:07}
J.~Jiang, Y.~Xin, and H.~K. Garg, ``Discrete memoryless interference channels
  with feedback,'' {\em CISS 41st Annual Conference}, pp.~581--584, Mar. 2007.

\bibitem{Vinod:arix09}
V.~Prabhakaran and P.~Viswanath, ``Interference channels with source
  cooperation,'' {\em submitted to IEEE Transaction on Information Theory
  (online available at arXiv:0905.3109v1)}, May 2009.

\bibitem{bresler:europe}
G.~Bresler and D.~Tse, ``The two-user {G}aussian interference channel: a
  deterministic view,'' {\em European Transactions on Telecommunications}, June
  2008.

\bibitem{ahlswede:it}
R.~Ahlswede, N.~Cai, S.-Y.~R. Li, and R.~W. Yeung, ``Network information
  flow,'' {\em IEEE Transactions on Information Theory}, vol.~46,
  pp.~1204--1216, July 2000.

\bibitem{Wu:05}
Y.~Wu, P.~A. Chou, and S.~Y. Kung, ``Information exchange in wireless networks
  with network coding and physical-layer broadcast,'' {\em CISS 39th Annual
  Conference}, Mar. 2005.

\bibitem{Katti:SIGCOMM06}
S.~Katti, H.~Rahul, W.~Hu, D.~Katabi, M.~Medard, and J.~Crowcroft, ``{XOR}s in
  the air: Practical wireless network coding,'' {\em ACM SIGCOMM Computer
  Communication Review}, vol.~36, pp.~243--254, Oct. 2006.

\bibitem{Tassiulas:NetCom09}
L.~Georgiadis and L.~Tassiulas, ``Broadcast erasure channel with feedback -
  capacity and algorithms,'' {\em 2009 Workshop on Network Coding, Theory and
  Applications (NetCod)}, June 2009.

\bibitem{Laneman:it03}
J.~N. Laneman and G.~W. Wornell, ``Distributed space-time-coded protocols for
  exploiting cooperative diversity in wireless networks,'' {\em IEEE
  Transactions on Information Theory}, vol.~49, pp.~2415--2425, Oct. 2003.

\bibitem{SuhTse:arix09}
C.~Suh and D.~Tse, ``Feedback capacity of the {G}aussian interference channel
  to within 1.7075 bits: the symmetric case,'' {\em arXiv:0901.3580v1}, Jan.
  2009.

\bibitem{Tuninetti:isit07}
D.~Tuninetti, ``On {I}nter{F}erence {C}hannel with {G}eneralized {F}eedback
  ({IFC-GF}),'' {\em IEEE International Symposium on Information Theory}, June
  2007.

\bibitem{SK:it}
J.~P.~M. Schalkwijk and T.~Kailath, ``A coding scheme for additive noise
  channels with feedback - part {I}: No bandwith constraint,'' {\em IEEE
  Transactions on Information Theory}, Apr. 1966.

\bibitem{CoverThomas}
T.~M. Cover and J.~A. Thomas, {\em Elements of Information Theory}.
\newblock New York Wiley, 2th~ed., July 2006.

\bibitem{Willems:it82}
F.~M.~J. Willems, ``The feedback capacity region of a class of discrete
  memoryless multiple access channels,'' {\em IEEE Transactions on Information
  Theory}, vol.~28, pp.~93--95, Jan. 1982.

\bibitem{Willems:it89}
A.~P. Hekstra and F.~M.~J. Willems, ``Dependence balance bounds for
  single-output two-way channels,'' {\em IEEE Transactions on Information
  Theory}, vol.~35, pp.~44--53, Jan. 1989.

\end{thebibliography}

\end{document}